\newcommand{\remove}[1]{}
\newtheorem {theorem}{Theorem}[section]
\newtheorem {lemma}[theorem]{Lemma}
\newtheorem {proposition}[theorem]{Proposition}
\newtheorem {corollary}[theorem]{Corollary}
\newtheorem {definition}[theorem]{Definition}
\newtheorem {remark}[theorem]{Remark}
\newtheorem {example}[theorem]{Example}
\def\minimal{\mathit{minimal}}
\newcommand{\ignore}[1]{}
\renewcommand\th{^{\text{th}}}
\newcommand\st{^{\text{st}}}
\newcommand\R{\mathbb{R}}
\newcommand\T{\mathscr{T}}
\def\fast {\mathit{fast}}
\def\part {\mathit{part}}
\def\view{\mathit{view}}
\def\Runs{\mathcal{R}}
\def\IIS {\operatorname{IIS}}
\def\SM {\operatorname{SM}}
\def\A {\mathbb{A}}
\def\St {\operatorname{St}}
\def\st {\operatorname{st}}
\def\s {\mathbf{s}}
\def\t {\mathbf{t}}
\def\Chr{\operatorname{Chr}}
\def\WF{\mathit{WF}}
\def\Res{\mathit{Res}}
\def\OF{\mathit{OF}}
\def\O {\mathcal{O}}
\def\I {\mathcal{I}}
\def\Bary{\operatorname{Bary}}
\def\ipart{\infty\text{-}\mathit{part}}
\def\slow{\mathit{slow}}
\def\ord{\operatorname{ord}}
\def\V{\mathcal{V}}
\def\Skel {\operatorname{Skel}}
\def\x{\mathbf{x}}
\def\adv{\operatorname{adv}}
\newfont{\mycrnotice}{ptmr8t at 7pt}
\newfont{\myconfname}{ptmri8t at 7pt}
\begin{document}

\bibliographystyle{abbrv}

\title{A Generalized Asynchronous Computability Theorem}

\author[Gafni]{Eli Gafni$^1$}
\address{$^1$Computer Science Department, UCLA, 3731F Boelter
  Hall\\ Los Angeles, CA 90095, USA (Corresponding author)}
\email{eli@ucla.edu}

\author[Kuznetsov]{Petr Kuznetsov$^2$}
\address{$^2$T\'el\'ecom ParisTech, INFRES, 46 Rue Barrault, 75013 Paris, France}
\email{petr.kuznetsov@telecom-paristech.fr}

\author[Manolescu]{Ciprian Manolescu$^3$}
\thanks {CM was supported by NSF grant DMS-1104406.}
\address {$^3$Department of Mathematics, UCLA, 520 Portola Plaza\\ Los Angeles, CA 90095, USA}
\email {cm@math.ucla.edu}

\maketitle

\begin{abstract}
We consider the models of distributed computation defined as subsets of the
runs of the iterated immediate snapshot model. 
Given a task $T$ and a model $M$, 
we provide topological conditions for $T$
to be solvable in $M$.

When applied to the wait-free model, our conditions result in the celebrated Asynchronous
Computability Theorem (ACT)
of Herlihy and Shavit.

To demonstrate the utility of our characterization, we consider a task that
has been shown earlier to admit only a very complex $t$-resilient solution.
In contrast, our generalized computability theorem
confirms its $t$-resilient solvability in a straightforward manner. 
\end{abstract}

\section{Introduction}

This paper characterizes task solvablility in models of distributed
computing, where processes communicate via reading from and writing to
a shared memory. We treat a model as a set of \emph{runs}, i.e., interleaving of read and
write steps issued by different processes.

What do we mean by a characterization? 
We say that a task $T$ is solvable in a model $M$, if there exists a
\emph{protocol} by which, in every run of $M$, 
each process taking sufficiently many steps eventually \emph{outputs},
so that the outputs satisfy the task's specification with respect to
the provided inputs. 
The conventional definition of solvability is therefore \emph{operational},
based on the existence of a protocol.
A topological characterization replaces the operational definition with 
the existence of a continuous map between topological spaces, capturing
the sets of possible inputs  and outputs of the task.
This topological characterization may provide insights about 
the (in)solvability of the task that are not easy to grasp operationally~\cite{HS93,BG93b,SZ00}.

In 1993, Herlihy and Shavit~\cite{HS93,HS99} characterized read-write
communication with no restrictions on the runs; this is referred to as the
\emph{wait-free} model. They formulated the Asynchronous Computability
Theorem (ACT) stating that a task $T$ is wait-free solvable
\emph{if and only if} there exists a simplicial, chromatic map from a subdivision
of simplexes of an appropriately defined input simplicial complex to an
appropriately defined output simplicial complex, satisfying the
specification of $T$. 

The original proof of ACT is given in~\cite{HS99} directly for the conventional
read-write shared-memory model (referred to as the \emph{standard
  shared-memory} and denoted SM), 
where memory consists of persistent
objects which can be written and read by a given process arbitrarily
often.
However, the original proof can be simplified by casting the
problem to the \emph{iterated immediate-snapshots}
(IIS)~\cite{BG97} shared memory model, in which processes march through a sequence of
Immediate-snapshots (IS) tasks invoking the next one with their output from
the previous one. 

The IIS model can be thought of as convenient mathematical tool to analyze and
understand distributed computing, contrasting with the more realistic but
less convenient standard shared-memory model (SM).
Casting distributed computation in SM to IIS is not unlike 
analyzing electromagnetic communication in the complex-number domain. 
We start with the real-world, we move the reasoning to an abstract mathematical world,
and then we translate the results back to reality.

The proof of ACT in~\cite{HS99} can thus be simplified along the
following lines:
\begin{enumerate}
\item[(1)] 

The wait-free runs in standard shared-memory can be simulated in the IIS
model~\cite{BG97,GR10-opodis}, and the IIS model can be simulated by the standard shared-memory model \cite{BG93b}.
Thus all we need is to characterize task solvability in IIS.

\item[(2)] If a task $T$ is wait-free solvable in the IIS model, then there
  exists an integer $k_T$ such that after the first $k_T$ invocations of
  immediate snapshots, each process can output in $T$. 
  This can be shown by a compactness argument or K\"onig's lemma (cf.,
  e.g., \cite{BG97}).

\item[(3)] Solving a task by $k_T$ immediate snapshots can be
  interpreted topologically as constructing a simplicial map from the $k_T$-th
  standard chromatic subdivision of the input complex to the output
  complex. This is because each immediate snapshot can be represented
  by a standard subdivision~\cite{Lin10, Koz12}.

\item[(4)] A chromatic map from an arbitrary subdivision of a complex can
  be approximated by a chromatic map from an iterated standard chromatic
  subdivision. 
This is a chromatic version
  of the standard simplicial approximation theorem, and a proof can be found in
  Section 5 of~\cite{HS99}.  It also follows from an operational argument in~\cite{BG97}.

\end{enumerate}

In this paper we extend this method of proof 
to models representing proper subsets of the runs of the wait-free
standard shared-memory model.

To this end, we start with a formal definition of the very notion of
solving a task in IIS. Surprisingly, no such definition appeared in
the literature until now.\footnote{Some specific \emph{compact} subsets of IIS runs were
  formally treated in~\cite{RRT08}.} In particular, we introduce the notions of 
\emph{participating} (taking at least one step) and {\em infinitely
  participating} (taking infinitely many steps) processes in an IIS run.

Further, to benefit from Step~(1) in the list above, we need a mapping between SM and IIS, that preserves the notions of participating and infinitely participating sets. Furthermore, in addition to this  forward simulation $F:~\SM\rightarrow \IIS$ we need a backward
simulation $B:~\IIS\rightarrow \SM$, such that 
for all $M\subseteq \SM$ we have $B(F(M)) \subseteq M$. We also ask for the restriction of $B$ to the image of $F$ to preserve the notions of participating and infinitely participating sets.
Given that the standard simulations~\cite{BG93a,BG97} do not meet
these requirements, we employ a new two-way simulation presented
in~\cite{BGK14}. This applies to a large set of \emph{adversarial}
SM models~\cite{DFGT11,Kuz12},
specified by sets of processes that can be infinitely participating in
a model run.
 
Thus, we can cast an adversarial SM model to its equivalent
model in IIS.  Moreover, the IIS model is, in a strict sense, richer
than SM: multiple IIS runs collapse into a single SM run by the simulation \cite{BGK14}.  
Our characterization of IIS task solvability applies to any {\em sub-IIS model} (that is, a subset of the runs in the IIS model), including those that have no equivalents in SM.

Imitating Step~(2) in sub-IIS models is in general
impossible. For example, in the $1$-resilient 
$3$-process model, the task of $2$-set agreement can be easily solved.  
However, every such solution has a run in which two processes fail
and the remaining process never outputs~\cite{HS93,BG93b,SZ00}.
In every $1$-resilient extension of a finite prefix of this run, 
all infinitely participating processes output.
Since such a prefix may have an arbitrary length, 
a uniform bound $k_T$ on the number of steps sufficient 
to output in a run of the $1$-resilient sub-IIS model does not exist. 
This observation is related to the fact that the model is {\em
  non-compact}, with respect to a metric that will be defined in Section~\ref{sec:top}. 

Therefore, our main theorem, which we call the \emph{generalized} asynchronous
computability theorem (GACT), 
is a generalization of Step (3) in the outline above, without  
relying on Step~(2). Instead, the characterization  proposes to ``approximate'' a
non-compact model by a sequence of compact models. The sequence converges to a superset of the 
target model. Thus, if the task is solvable by each compact model in the sequence, it is solvable by the target model.
Each compact model is represented as subcomplex of a subdivided
simplex. 
Hence GACT deals with a sub-complex of a subdivided simplex instead of a
subdivided simplex (as ACT does), and instead of saying that there exists 
a single subdivision (as in ACT)  
it requires the existence of a sequence of sub-complexes.

Step~(4) now applies individually to each subdivision in the
sequence. In this paper we do not deal with this step, because the formulation would be too cumbersome, and because it is not necessary for our examples. Nevertheless, we believe that our theorem can be extended to arbitrary rather than standard chromatic subdivisions.

ACT turned out to be 
an essential tool in distributed
computing~\cite{GK99-undecidable,HR97,HR13,HR10-shell,HR00-spans,CHR12}. 
We show that our GACT holds that promise too.
We consider a task $T$, solvable $t$-resiliently, but (to our knowledge) only with a very
involved algorithm~\cite{G98}. 
In contrast, by applying the methods developed in this paper, we show that determining the $t$-resilient
solvability of $T$ is relatively simple.

The paper is organized as follows: 
In Section~\ref{sec:models} we describe the IIS model and give some examples of sub-IIS models. In Section~\ref{sec:topdef} we review some notions from combinatorial topology. In Section~\ref{sec:tasks} we review the topological definition of a task, and explain what it means for a task to be solvable in a model. In Section~\ref{sec:top} we describe sub-IIS models topologically. In Section~\ref{sec:gact} we prove our main result, GACT. 
In Section~\ref{sec:wf} we explain how GACT gives back the well-known
ACT in the wait-free case. In Section~\ref{sec:proofs} we introduce a new topological tool: a version of the simplicial approximation for infinite chromatic complexes. Using this tool, in Section~\ref{sec:lc} we show how GACT can be applied to a class of tasks called link connected; in particular, we use GACT to prove that a particular task can be solved in the
$t$-resilient model. 
In Section~\ref{sec:related} we recall some related work, and
in Section~\ref{sec:conc} we draw the conclusions.

\section{Sub-IIS models}
\label{sec:models}

In this section, we describe our perspective on the \emph{Iterated
  Immediate Snapshot} (IIS) model~\cite{BG97} and give examples of
sub-IIS models. 

\subsection{The IIS model} \label{sec:IIS}
Suppose we have $n+1$ processes $p_0, p_1, \dots, p_n$. 
A run $r$ in IIS is a sequence of \emph{non-empty} sets
of processes $S_1 \supseteq S_2 \supseteq \dots$, with each $S_k
\subseteq \{p_0, \dots, p_n\}$ consisting of those processes that
participate in the $k$th iteration of immediate snapshot
(IS). Furthermore, each $S_k$ is equipped with an ordered partition:
$S_k = S_k^1 \cup \dots \cup S_k^{n_k}$ (for some $n_k \leq n$),
corresponding to the order in which processes are invoked in the
respective IS. 

Let $\Runs$ be the set of runs in IIS. Fix a run $r \in \Runs$, with $r = S_1,S_2, \ldots$ as above. The processes $p_i \in S_1$ are called {\em participating}.  
If $p_j$ appears in all the sets $S_k$, we say that $p_j$ is
{\em infinitely participating} in $r$. 
The sets of participating and infinitely participating processes in a run $r$ are denoted $\part(r)$ and $\ipart(r)$, respectively.

If either $k=0$ or $p_i \in S_k$ for some $k \geq 1$, then we define a
set called the \emph{$k$th view of ${p_i}$ in the run $r$},
$\view(p_i, k)$, recursively, as follows:
\begin{enumerate}
\item $\view({p_i}, 0)=\{p_i\}$;
\item For $k \geq 1$, the view of ${p_i} \in S_k^j \subseteq S_k$ is $\view({p_i}, k)=\{\view({p_s}, {k-1}) \mid p_s \in S_k^1 \cup \dots \cup S_k^j \}$.
\end{enumerate}

Our definitions can be interpreted operationally as follows.
Every process proceeds through an infinite series of one-shot
immediate snapshot (IS)
instances~\cite{BG93a}: $IS_1,IS_2\ldots$.  
Then $S_k$ is interpreted as the set of processes
accessing memory $IS_k$, and each $S_k^j$ is the set of processes
obtaining the same \emph{view} after accessing $IS_k$.
Recall that in IS, the view of a process $p_i\in S_k^j$ is defined by the values
written by the processes in $S_k^1 \cup \dots \cup S_k^j$.     

The original definition of IIS~\cite{BG97} can be thought of as the
variant of our model, in which we impose the condition $S_1=S_2 =\dots
= \{p_0, \dots, p_n\}$, i.e., every process is infinitely
participating.  What is the advantage of our new, more refined, definition of IIS? 
It allows a run to be extended to more processes without changing the
views of the already existing processes. 
For instance, in the run $r=\{\{p_0\}\},\{\{p_0\}\},\ldots$ we have  $\part(r)=\ipart(r)=\{p_0\}$.
In the run $r'=\{\{p_0\}, \{p_1\}\}, \{\{p_0\}, \{p_1\}\}, \{\{p_0\},
\{p_1\} \},\ldots$ we have $\part(r')=\ipart(r')= \{p_0,p_1\}$. 
However, $p_0$ cannot tell whether it is in $r$ or in $r'$, because
the corresponding views of $p_0$ are the same in both runs. In this situation, we say that $r'$ is an extension of $r$.

Formally, we say that a run $r' = S_1', S_2', \dots$ is an {\em
  extension} of a run $r = S_1, S_2, \dots$, and we write $r\leq r'$,
if (i) $S_j\subseteq S_j'$ for all $j$, and (ii) the views of the
processes in $\part(r)$ are the same in $r'$ as in $r$. 
This defines a partial order on $\Runs$.

If $r$ is a run, let $\minimal(r)$ be the smallest run $r_0$ such
that $r_0 \leq r$ (that is, for all $r' \leq r$, we have $r_0 \leq
r'$). It is not difficult to see that $\minimal(r)$ exists and is unique. We then define $\fast(r)=
\ipart(\minimal(r))$. We define $\slow(r)$ to be the complement set of
$\fast(r)$. 

Intuitively, $\fast(r)$ is the largest set of processes that ``see''
each other (appear in each other's view) infinitely often in $r$. 
In other words, for all $p_i,p_j\in \fast(r)$ and all $k\geq 0$, there
exists $\ell\geq k$ such that $\view(p_i,k)$ appears in
$\view(p_j,\ell)$. 

\subsection{Examples of models} We define a {\em sub-IIS model} $M$ to be any subset of $\Runs$.

\begin{example}
\label{ex:wf}
The {\em wait-free} (or {\em completely asynchronous}) model $\WF$ is
the set $\Runs$ itself. The interpretation of $\WF$ is that anything
can happen (all sorts of step interleavings are allowed). 
\end {example}

\begin{example}
\label{ex:res}
For $t\leq n$, the {\em $t$-resilient model} $\Res_t$ consists of the runs $r \in \Runs$ such that $|\fast(r)| \geq n+1-t.$ This is the model in which at most $t$ processes are slow. 
\end{example}

\begin{example}
\label{ex:obstruction-free}
For $k\leq n+1$, the {\em $k$-obstruction-free model} $\OF_k$ consists
of all the runs $r$ in which no more than $k$ processes are fast,
i.e., $|\fast(r)| \leq k.$  This model was
previously discussed in \cite{Gaf08-concurrency},  following a suggestion of Guerraoui. \end{example}

\begin{example}
\label{ex:adv}
More generally, consider the {\em model with adversary $\A$} \cite{DFGT11}, which we denote by $M^{\adv}(\A)$. Here, $\A$ is any subset of the power set of $\{0, 1, \dots, n\}$.  We then define $M^{\adv}(\A)$ to consist of all runs $r$ such that $ \slow(r) \in \A$. 
\end{example}

\section{Topological definitions}
\label{sec:topdef}

Before moving forward, we need to review several notions from topology. We will assume that the reader has a basic knowledge of metric spaces (open sets, continuity, compactness), as in \cite[Chapter 7]{Royden}.

\remove{

\subsection{Metric spaces}
A {\em metric space} consists of a set $X$ and a distance function $ d : X \times X \to \R$
such that
\begin{enumerate}[(a)]
\item $d(x, y) > 0$ for any $x \neq y$, and $d(x, x) =0$ for all $x$.
\item $d(x, y) = d(y, x)$ for all $x, y \in X$.
\item $d(x, y) \leq d(x, z) + d(z, y)$ for all $x, y, z \in X$.
\end{enumerate}
For example, any subset of $\R^n$, equipped with the Euclidean distance function, is a metric space. 

Given a metric space $X$, a subset $U \subseteq X$ is called {\em open} if for all $x \in U$, there exists $\epsilon > 0$ such that $y \in U$ whenever $d(x, y) < \epsilon$. Further, a sequence $\{x_n\}$ of elements of $X$ is said to converge to some $x \in X$ if for all $\epsilon > 0$, there is $N > 0$ such that for all $n \geq N$ we have $d(x_n, x) < \epsilon$.

If $X$ and $Y$ are topological spaces, a map $f:X \to Y$ is called {\em continuous} if $\forall \epsilon > 0$, $\exists \delta > 0$ such that $d(x, y) < \delta \Rightarrow d(f(x), f(y)) < \epsilon$; or, equivalently, if the preimages of open sets are open. A continuous bijection $f: X \to Y$ such that $f^{-1}:Y \to X$ is continuous is called a {\em homeomorphism}. 

A metric space $X$ is called {\em compact} if every open cover has a finite subcover; i.e. if we have a collection of open sets $\{U_\alpha\}_{\alpha \in A}$ such that $\cup_{\alpha} U_\alpha = X$, there is a finite subcollection of $U_{\alpha}$ whose union is still $X$. Equivalently, $X$ is compact if and only if every sequence of elements of $X$ admits a convergent subsequence.

\begin{example}
With respect to the Euclidean metric, a closed interval $[a, b] \subset \R$ is compact, but $(a, b)$ and $\R$ are not compact.  
\end{example} 
}

\subsection{Simplicial complexes}
A good reference for the material in this section is Chapter 3 in \cite{Spanier}.

A {\em simplicial complex} is a set $V$, together with a collection $C$ of finite nonempty subsets of $V$ 
such that:
\begin{enumerate}[(a)]
\item For any $v \in V$, the one-element set $\{v\}$ is in $C$;
\item If $\sigma \in C$ and $\sigma' \subseteq \sigma$, then $\sigma' \in C$.
\end{enumerate}

The elements of $V$ are called {\em vertices}, and the elements of $C$ are called a {\em simplices}. We usually drop $V$ from the notation, and refer to the simplicial complex as $C$. 

A simplicial complex $C$ is called {\em finite} if the collection $C$ is finite. A weaker notion is {\em locally finite}: $C$ is said to be locally finite if every vertex of $C$ belongs to only finitely many simplices  in $C$. For simplicity, we will assume that our complexes are locally finite.

A subset of a simplex is called a {\em face} of that simplex.

A {\em subcomplex} of $C$ is a subset of $C$ that is also a simplicial complex.

The {\em dimension} of a simplex $\sigma \in C$ is its cardinality minus one.  The $k$-skeleton of a complex $C$, denoted $\Skel^k C$, is the subcomplex formed of all simplices of $C$ of dimension $k$ or less.

A simplicial complex $C$ is called {\em pure} of dimension $n$ if $C$ has no simplices of dimension $> n$, and every $k$-dimensional simplex of $C$ (for $k < n$) is a face of an $n$-dimensional simplex of $C$.

Given a simplex $\sigma \in C$, we denote by $\st \sigma$ the {\em open star} of $\sigma$, that is, the set of all simplices in $C$ that have $\sigma$ as a face. The {\em closed star} of $\sigma$, denoted $\St \sigma$, is the smallest simplicial complex that contains $\st \sigma$. The difference $(\St \sigma) \setminus (\st \sigma)$ is called the {\em link} of $\sigma$.

Let $A$ and $B$ be simplicial complexes. A map $f: A \to B$ is called {\em simplicial} if it is induced by a map on vertices; that is, $f$ maps vertices to vertices, and for any $\sigma \in A$, we have
$$ f(\sigma) = \bigcup_{v \in \sigma} f(\{v\}).$$
A simplicial map $f$ is called {\em noncollapsing} (or {\em dimension-preserving}) if $\dim f(\sigma) = \dim \sigma$ for all $\sigma \in A$. 

Any simplicial complex $C$ has an associated {\em geometric realization} $|C|$, defined as follows. Let $V$ be the set of vertices in $C$. As a set, we let $C$ be the subset of $[0,1]^V = \{ \alpha : V \to [0,1]\}$ consisting of all functions $\alpha$ such that $\{ v \in V \mid \alpha(v) > 0 \} \in C$ and $\sum_{v \in V} \alpha(v) = 1$. For each $\sigma \in C$, we set 
$|\sigma| = \{ \alpha \in |C| \mid \alpha(v) \neq 0 \Rightarrow v \in \sigma \}.$
Each $|\sigma|$ is in one-to-one correspondence to a subset of $\R^n$ of the form $\{(x_1, \dots, x_n) \in [0,1]^n \mid \sum x_i = 1\}.$ We put a metric on $|C|$ by $d(\alpha, \beta) = \sum_{v \in V} |\alpha(v) - \beta(v)|.$ 

Given a simplicial map $f: A \to B$, there is an associated continuous, piecewise linear map $|f|: |A| \to |B|$, defined by the formula
$$ |f|(\alpha)(v') = \sum_{f(v) = v'} \alpha(v).$$

A nonempty complex $C$ is called {\em $k$-connected} if, for each $m \leq k$, any continuous map of the $m$-sphere into $|C|$ can be extended to a continuous map over the $(m+1)$-disk.

A {\em subdivision} of a simplicial complex $C$ is a simplicial complex $C'$ such that:
\begin{enumerate}
\item The vertices of $C'$ are points of $|C|$.

\item For any $\sigma' \in C'$, there exists $\sigma \in C$ such that $\sigma' \subset |\sigma|$.

\item The piecewise linear map $|C'| \to |C|$ mapping each vertex of $C'$ to the corresponding point of $C$ is a homeomorphism.
\end{enumerate}

In particular, every complex $C$ admits a {\em barycentric subdivision} $\Bary(C)$, defined as follows. The vertices of $\Bary(K)$ are the barycenters of the simplices of $C$ (in the geometric realization). The simplices of $\Bary(K)$ correspond to ordered sequences $(\sigma_0, \dots, \sigma_m)$ of simplices of $C$, where $\sigma_i$ is a face of $\sigma_{i+1}$; the barycenters of $\sigma_i$ are then the vertices of the corresponding simplex in $\Bary(K)$.

By iterating this construction $k$ times we obtain the $k$th barycentric subdivision, $\Bary^k(C)$.

\subsection{Chromatic complexes}
We now turn to the chromatic complexes used in distributed computing, and recall some notions from \cite{HS99}.

Fix $n \geq 0$. The {\em standard $n$-simplex} $\s$ has $n+1$ vertices, in one-to-one correspondence with $n+1$ {\em colors} $0, 1, \dots, n$. A face $\t$ of $\s$ is specified by a collection of vertices from $\{0,  \dots,  n\}$. We view $\s$ as a complex, with its simplices being all possible faces $\t$. Note that the open star of a face $\t$ is $\st \t = \{\t' \mid \t \subseteq \t' \},$ while the closed star of any face is the whole simplex $\s$.

A {\em chromatic complex} is a simplicial complex $C$ together with a noncollapsing simplicial map $\chi: C \to \s$. Note that $C$ can have dimension at most $n$. We usually drop $\chi$ from the notation. We write $\chi(C)$ for the union of $\chi(v)$ over all vertices $v \in C$. Note that if $C' \subseteq C$ is a subcomplex of a chromatic complex, it inherits a chromatic structure by restriction. 

In particular, the standard $n$-simplex $\s$ is a chromatic complex, with $\chi$ being the identity.

Every chromatic complex $C$ has a {\em standard chromatic subdivision} $\Chr C$. Let us first define $\Chr \s$ for the standard simplex $\s$. The vertices of $\Chr \s$ are pairs $(i, \t)$, where $i \in \{0,1 ,\dots, n\}$ and $\t$ is a face of $\s$ containing $i$. We let $\chi(i, \t) = i$. Further, $\Chr s$ is characterized by its $n$-simplices; these are the $(n+1)$-tuples $((0,\t_0), \dots, (n, \t_n))$ such that:
\begin{enumerate}[(a)]
\item For all $\t_i$ and $\t_j$, one is a face of the other;
\item If $j \in \t_i$, then $\t_j \subseteq \t_i$. 
\end{enumerate} 
The geometric realization of $\s$ can be taken to be the set $\{\x=(x_0, \dots, x_n) \in [0,1]^{n+1} \mid \sum x_i = 1\},$ with the vertex $i$ corresponding to the point $\x^i$ with $i$ coordinate $1$ and all other coordinates $0$. Then, we can identify a vertex $(i, \t)$ of $\Chr \s$ with the point
$$\frac{1}{2k-1} \x_i + \frac{2}{2k-1}  \Bigl( \sum_{\{j \in \t \mid j \neq i\}} \x_j \Bigr) \  \in |\s| \subset \R^{n+1},$$
where $k$ is the cardinality of $\t$. (Compare \cite[Definition 5.7]{HS99}.) Thus, $\Chr \s$ becomes a subdivision of $\s$ and the geometric realizations are identical: $|\s|=|\Chr \s|$. 

Next, given a chromatic complex $C$, we let $\Chr C$ be the subdivision of $C$ obtained by replacing each simplex in $C$ with its chromatic subdivision. Thus, the vertices of $\Chr C$ are pairs $(p, \sigma)$, where $p$ is a vertex of $C$ and $\sigma$ is a simplex of $C$ containing $p$. If we iterate process this $m$ times we obtain the $m\th$ chromatic subdivision, $\Chr^m C$. 

Let $A$ and $B$ be chromatic complexes.  A simplicial map $f: A \to B$  is called a {\em chromatic map} if for all vertices $v \in A$, we have $\chi(v) = \chi(f(v))$. Note that chromatic map is automatically noncollapsing. A chromatic map has chromatic subdivisions $\Chr^m f: \Chr^m A \to \Chr^m B$. Under the identifications of topological spaces $|A| \cong |\Chr^m A|, |B| \cong |\Chr^m B|,$ the continuous maps $|f|$ and $|\Chr^m f|$ are identical.

A {\em chromatic multi-map} between $A$ and $B$ is a map $\Delta: A \to 2^B$ that, for any $m \leq n$, takes every $m$-simplex of $A$ to a pure $m$-dimensional subcomplex of $B$, such that: (i) For every simplex $\sigma$ of $A$, we have $\chi(\sigma) = \chi(\Delta(\sigma))$, and (ii) For all simplices $\sigma, \tau \in A$,  we have
$ \Delta(\sigma \cap \tau) \subseteq \Delta(\sigma) \cap \Delta(\tau).$ In particular, if $\sigma'$ is a face of $\sigma$, then $\Delta(\sigma') \subseteq \Delta(\sigma).$

\section{Tasks}
\label{sec:tasks}

\subsection{Definitions}
\label{sec:deftasks}

A {\em task} $T = (\I, \O, \Delta)$ on $n+1$ processes $\{p_0, \ldots,
p_n\}$ consist of two finite, pure $n$-dimensional chromatic complexes $\I$ and $\O$, together
with a chromatic multi-map $\Delta: \I \to 2^{\O}$. The {\em input complex} $\I$ specifies the possible input values, the {\em output complex} $\O$ specifies the possible output values, and $\Delta$ describes which output values are allowed for a given input. The colors specify to which process each input or output value corresponds.

A task is called {\em input-less} if the input complex is the standard
simplex $\s$, colored by the identity. 
Then each process starts with input only its own id.\footnote{Note that in the definition of a multi-map we allowed images to be empty. This is somewhat non-standard, as it means that processes in a task do not have to output. If one prefers to avoid that, for every task $T=(\I, \O, \Delta)$ we can construct a new, equivalent task $T^+ = (\I^+, \O^+, \Delta^+)$ as follows. We let $\I^+=\I$. The output complex $\O^+$ is obtained from $\O$ by adding extra vertices $v_0, \dots,v_n$ (with $v_i$ corresponding to ``no output'' for the process $i$); moreover, for each simplex $\sigma$ in $\O$, we add an $n$-simplex $\sigma^+$ in $\O^+$ by adjoining vertices $v_i$ for the colors $i$ not represented in $\sigma$. Finally, we let $\Delta^+(\tau) = (\Delta(\tau))^+$.}

\subsection{Affine tasks}
Many examples of input-less tasks can be constructed as follows. Let $L \subseteq \Chr^k \s$ be a pure $n$-dimensional subcomplex of the $k\th$ chromatic subdivision of $\s$, for some $k$. For each face $\t \subseteq \s$, the intersection $L \cap \Chr^k \t$ is a subcomplex of $\Chr^k \s$; we assume that this subcomplex is pure of the same dimension as $\t$ (and possibly empty).  

We define an input-less task $(\s, L, \Delta)$ by setting 
$\Delta(\t) = L \cap \Chr^k \t$ for any face $\t \subseteq \s$.  Tasks constructed like this are called {\em affine}. To depict an affine task, we can simply draw the corresponding complex $L$.

By abuse of notation, we will usually write $L$ for the affine task $(\s, L, \Delta)$. We chose the name {\em affine} because if we have a task $L$ as above, the geometric realizations of the simplices of $L$ can be depicted as lying on affine subspaces of $\R^n$. Similar terminology appears in algebraic geometry, where one talks about affine varieties.

For example, consider the task of {\em total order} $L^{\ord}$,
defined as follows. For each permutation $\alpha$ of $\{0,1,\dots, n\}$, there is a unique $n$-simplex $\sigma_{\alpha}$ in the second chromatic subdivision $\Chr^2 \s$ with the property that the vertex of $\sigma_{\alpha}$ colored $i$ is in the interior of an $i$-dimensional face of $\s$. For example, for $3$ processes, the six simplices of the form $\sigma_{\alpha}$ are those shown here:
$$ \scalebox{.85}{\begin{picture}(0,0)%
\includegraphics{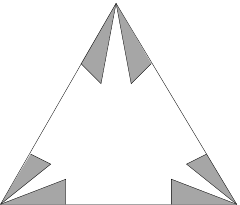}%
\end{picture}%
\setlength{\unitlength}{829sp}%
\begingroup\makeatletter\ifx\SetFigFont\undefined%
\gdef\SetFigFont#1#2#3#4#5{%
  \reset@font\fontsize{#1}{#2pt}%
  \fontfamily{#3}\fontseries{#4}\fontshape{#5}%
  \selectfont}%
\fi\endgroup%
\begin{picture}(5424,4619)(754,-5343)
\end{picture}%
}$$

The total order task is the affine task associated to the complex
$L^{\ord} \subseteq \Chr^2 \s$ is the union of all the $(n+1)!$
simplices of the form $\sigma_{\alpha}$. The name total order refers
to the fact that  the possible outputs (when all $n+1$ processes are
running) are in one-to-one correspondence with the total orderings
(i.e., permutations) of the set of processes $\{0, 1, \dots, n\}$,
similar to \emph{one-shot total-order broadcast}~\cite{HT94} 
where every process proposes
broadcasts its identifier and the processes agree on the order in
which the identifiers are delivered. 

\subsection{Views with input}
Let $r = S_1, S_2, \dots$ be a run in IIS. Recall that in
Section~\ref{sec:IIS} we defined the $k$th view of a process in the
run, $\view(p_i, k)$.  We now generalize this to allow arbitrary inputs. 

Let $\I$ be a pure $n$-dimensional chromatic complex, and let $\omega \in \I$ be an $n$-dimensional simplex. The $k$th view of $p_i$ in the run $r$ starting with input $\omega$ is denoted $\view(p_i,\omega, k)$ and defined recursively as follows:
\begin{enumerate}
\item $\view({p_i}, \omega, 0)=\{(p_i, v)\}$, where $v$ is the vertex colored $i$ in the simplex $\omega$;
\item For $k \geq 1$, the view of ${p_i} \in S_k^j \subseteq S_k$ is $\view({p_i}, \omega, k)=\{\view({p_s}, \omega, {k-1}) \mid p_s \in S_k^1 \cup \dots \cup S_k^j \}$.
\end{enumerate}

\subsection{Task Solvability}
In a sub-IIS model, informally, a task $T = (\I, \O, \Delta)$ is solvable in $M$ if for all runs $r \in M$, the infinitely participating processes output, and their output is a subsimplex of the allowed outputs for the participating processes. An output is the result of a {\em protocol}. For us, when dealing with solvability rather than complexity, a protocol is just a partial map from views to outputs. Thus, requiring an infinitely participating process to output means requiring that eventually it will have a view that is mapped by the protocol to an output value.

We define the set $\V= \V(\I)$ to consist of all possible $\view(p_i, \omega, k)$ in all runs $r \in \Runs$, for all processes $p_i$, simplices $\omega \in \I$, and integers $k \geq 0$. Formally, a protocol $\Pi$ for the task $T$ is a map from a subset of $\V$ to the set of vertices in the output complex $\O$.

\begin{definition}
A task $T = (\I, \O, \Delta)$ is {\em solvable} in a sub-IIS model $M$ if there exists a protocol $\Pi$ for $T$ such that for all $r \in M$ (with $r=S_1, S_2, \dots$ as before):
\begin{enumerate}
\item For each $p_i$, and for each $n$-dimensional simplex $\omega \in \I$, there exist $k_0$ and a vertex $v$ of $\O$ colored $i$, such that:
\begin{itemize}
\item For all $k <k_0$,  $\view(p_i, \omega, k) \notin \textit{domain}(\Pi)$;
\item For all $k \geq k_0$ such that ${p_i} \in S_k$ exists, we have $\Pi(\view({p_i}, \omega, k))=v$.
\end{itemize}

\noindent (This condition is satisfied vacuously if $p_i$ is not infinitely participating, because we can find $k_0$ such that $p_i$ did not take $k_0$ steps in $r$, so $p_i \not \in S_k$ for $k \geq k_0$.)
 \vskip5pt
 
\item For all $k$,  $\{\Pi(\view(p_i, \omega, k)) \mid \view(p_i,\omega, k)\in \textit{domain}(\Pi)$\} is a sub-simplex of a simplex in $\Delta\bigl (\omega \cap \chi^{-1}(\part(r)) \bigr)$.
\end{enumerate}
\end{definition}

In every run $r\in M$, condition (1) above requires every infinitely
participating to eventually produce an output, and condition (2)
requires the produced output to respect the task specification
$\Delta$ given the inputs of participating processes.

\subsection{Example: solving tasks in sub-IIS}
Note that our definition of task solvability in sub-IIS models brings
illuminating subtleties that were not observed in the conventional SM model.   
Consider a sub-IIS model $M$ and the corresponding model
$M_{\fast}=\{r | \exists r'\in M, r=\minimal(r')\}$.
If a task $T$ is solvable in $M$ then it is obviously solvable in
$M_{\fast}$, but not necessarily vice-versa. Indeed, consider the
obstruction-free model $\OF=\OF_1$, consisting of runs with a single
fast process. Obviously, the total order task $L^{\ord}$ cannot be
solved in $\OF$, because in runs $r$ where the process
in $\fast(r)$ is always ahead of the rest ($S_k=\fast(r)$ for all $k$), the rest of the
processes essentially proceed \emph{wait-free}. 
In contrast, we can easily solve $L^{\ord}$ in $\OF_{\fast}$ using
commit-adopt~\cite{Gaf98} (implemented in IIS).

\section{Topological interpretation}
\label{sec:top}

Recall that $\Runs$ denotes the set of runs in IIS. We put a metric on $\Runs$ as follows. Given runs $r, r' \in \Runs$, we let $k=k(r, r')$ denote the largest $k \geq 0$ such that the first $k$
steps of $r$ and $r'$ are identical. (In particular, we let $k=\infty$
when $r=r'$.) 
We set the distance between $r$ and $r'$ to be $d(r, r')
= 1/(1+k)$. 
It is easy to see that $d(r,r')$ is a metric that captures how
``close'' the two runs are. 

Recall that a metric space is compact if every open cover has a finite subcover; or, equivalently, if any infinite sequence has a convergent subsequence. (See \cite[Section 7.7]{Royden}, for example.) 
For future reference, we mention:

\begin{lemma}
\label{lem:compact}
The metric space $\Runs$ is compact.
\end{lemma}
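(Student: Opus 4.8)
The plan is to show that $\Runs$ is sequentially compact, which is equivalent to compactness for metric spaces. So I would start with an arbitrary sequence $(r^{(m)})_{m \geq 1}$ of runs in $\Runs$ and construct a convergent subsequence by a diagonal argument, exploiting the fact that at each step of a run there are only finitely many possible ``moves.'' Recall a run is a sequence $S_1 \supseteq S_2 \supseteq \cdots$ of nonempty subsets of the finite set $\{p_0, \dots, p_n\}$, each equipped with an ordered partition. The key finiteness observation is that there are only finitely many pairs (nonempty subset $S \subseteq \{p_0,\dots,p_n\}$, ordered partition of $S$); call this finite set $\mathcal{M}$ of ``single-round data.''

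First I would set up the diagonalization. Since the first-round data $S_1^{(m)}$ (with its ordered partition) lies in the finite set $\mathcal{M}$, some value is attained infinitely often; pass to a subsequence on which the first-round data is constant. Within that subsequence, the second-round data again lies in $\mathcal{M}$, so pass to a further subsequence on which it too is constant; and so on. Taking the diagonal subsequence $(r^{(m_m)})$, we get a subsequence whose first $k$ rounds stabilize for every $k$. Define $r$ to be the run whose $k$th round data is this stable value; one must check $r$ is a legitimate run, i.e.\ the sets are nested and nonempty — this is automatic since each $r^{(m_m)}$ (for large $m$) agrees with $r$ on the first $k$ rounds, so $r$ inherits $S_k \supseteq S_{k+1} \neq \emptyset$ from the tail of the sequence. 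Then $k(r^{(m_m)}, r) \to \infty$, hence $d(r^{(m_m)}, r) = 1/(1 + k(r^{(m_m)}, r)) \to 0$, so the diagonal subsequence converges to $r \in \Runs$.

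The only mild subtlety — and the step I'd expect to need the most care — is making precise what ``the first $k$ steps of a run'' means so that the metric $d$ and the diagonalization line up: a ``step'' should be read as a single round (the pair consisting of $S_j$ together with its ordered partition), and one must confirm that two runs agreeing on the first $k$ rounds indeed have $k(r,r') \geq k$, and conversely. Once that bookkeeping is fixed, the argument is the standard proof that a countable product of finite sets (here, a subset of $\mathcal{M}^{\N}$ cut out by the nesting condition) is compact, and there is nothing deep left. Alternatively, one could phrase this as: $\Runs$ embeds as a closed subset of the compact space $\mathcal{M}^{\N}$ (with the product topology, which is metrized by a metric equivalent to $d$), and a closed subset of a compact space is compact; the closedness of $\Runs$ follows because the nesting and nonemptiness conditions are ``closed'' conditions, each depending on only finitely many coordinates.
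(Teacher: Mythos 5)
Your proof is correct and follows essentially the same route as the paper: a diagonal subsequence argument using the finiteness of possible round data (set plus ordered partition) at each step, so that the diagonal subsequence converges in the metric $d$. The closing remark about viewing $\Runs$ as a closed subset of a compact product space is a fine alternative phrasing, but the core argument matches the paper's proof of Lemma~\ref{lem:compact}.
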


\begin{proof} 
Let $r[1] = (S[1]_1, S[1]_2, \dots$),  $r[2]=(S[2]_1,$ $S[2]_2,$ $\dots)$, $\dots$
be an infinite sequence of runs.  (Each $S[i]_k$ is a set equipped with an ordered partition.) 
There are only finitely many possibilities for the first step $S^j_1$,
so we can find a subsequence $(r[1,1], r[1,2], \dots)$ of $(r[1],
r[2], \dots)$ such that the first step in each $r[1, i]_q$ is a
constant choice $S_1$, with a constant partition $S_1 = S_1^1 \cup
\dots \cup S_1^{n_1}$. From the subsequence $(r[1,1], r[1,2],$ $\dots)$
we can extract a further subsequence $(r[2,1], r[2,2], \dots)$ such
that the second step is a constant choice $S_2 = S_2^1 \cup \dots \cup
S_2^{n_2}$, and so on. Hence, the diagonal subsequence $(r[1,1],
r[2,2]$, $r[3,3], \dots)$ converges to $r=(S_1, S_2,
\dots)$. Thus, every sequence in $\Runs$ has a converging
subsequence. 
\end{proof}

The metric space $\Runs$ is not easy to
visualize. We can however get a partial understanding by focusing on the views of the fast processes in each run. 

Consider the standard chromatic subdivisions of the $n$-simplex
$\s$. Recall that a run in IIS can be identified with a sequence of simplices $\sigma_0, \sigma_1, \sigma_2, \dots$, with $\sigma_k \in \Chr^k \s$ and $|\sigma_{k+1}| \subset |\sigma_k|$~\cite{Lin10,Koz12}. 

Note that every run converges to a point of the geometric realization
$|\s|$, so there is a natural, continuous map $\pi: \Runs \to |\s|$,
which we call the {\em affine projection}. The information captured in
$p=\pi(r)$ exactly consists of the views of the fast processes in
$r$. In fact, each point $\pi(r) \in |\s|$ can be identified with the
minimal run $\minimal(r)$. 

There is a canonical coloring map $\chi: |\s| \to 2^{\{0,1,\dots,
  n\}}$ that extends the colorings on all chromatic subdivisions
$\Chr^m \s$ to $|\s|$. Precisely, given a point $p \in |\s|$, we let
$\chi(p)$ be the minimal subset $A \subseteq \{0,1, \dots, n\}$ such
that $p$ lies in a simplex $\sigma$ of a chromatic subdivision $\Chr^k
\s$ with $\chi(\sigma)=A$. It is easy to see that $\chi(p)=\fast(r)$, 
for any $r$ such that $\pi(r)=p$. 

A special case of a sub-IIS model is a set of runs of the form  $\pi^{-1}(S)$, where
$S \subseteq |\s|$.  We call such models 
{\em geometric}, because they can be easily visualized
as associated subsets of $|\s|$. Notice that all models in
Examples~\ref{ex:wf}-\ref{ex:adv} are geometric.  
However, our main results
will apply equally well to {\em all} (not necessarily geometric) sub-IIS models.

\section{The Generalized Asynchronous Computability Theorem}
\label{sec:gact}

\subsection{Terminating subdivisions}
Let $C$ be a chromatic complex. Consider 
standard chromatic subdivisions $\Chr^m C$ for $m >0$ (Section~\ref{sec:tasks}), and 
recall that the vertices of $\Chr^m C$ can be identified with a subset of the vertices in $\Chr^{m+1} C$.

A {\em terminating subdivision} 
$\T$ of $C$ is specified by a sequence of chromatic complexes $C_0,
C_1, C_2, \dots,$ and a sequence of subcomplexes $\Sigma_0 \subseteq \Sigma_1 \subseteq 
\Sigma_2 \subseteq \dots$ such that for all $k\geq 0$:
\begin{enumerate}[(i)]
\item $\Sigma_k$ is a subcomplex of $C_k$;
\item $C_0 = C$, and $C_{k+1}$ is obtained from $C_k$ by taking the partial chromatic
  subdivision in which the simplices in $\Sigma_k$ are ``terminated'', i.e., not further 
  subdivided. Precisely, we replace a simplex $\sigma$ in $C_k$ by a coarser subdivision than $\Chr(\sigma)$. Whereas the vertices of $\Chr(\sigma)$ are pairs $(p, \tau)$ with $\tau$ being a face of $\sigma$ and $p$ a vertex of $\tau$, in $C_{k+1}$ we consider the pairs $(p, \tau)$ of that form such that either $\tau \not \in \Sigma_k$, or $\tau$ consists of a single vertex in $\Sigma_k$. 
 For example, if $\Sigma_k$ is zero-dimensional, then $C_{k+1} = \Chr^1
C_k$; if $C_k$ is the standard $2$-dimensional simplex and $\Sigma_k$
is one of its $1$-dimensional faces, we have:
$$ \scalebox{.85}{\begin{picture}(0,0)%
\includegraphics{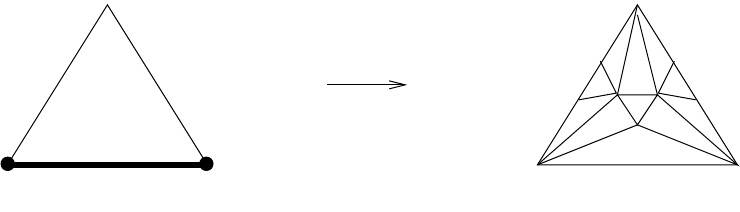}%
\end{picture}%
\setlength{\unitlength}{2526sp}%
\begingroup\makeatletter\ifx\SetFigFont\undefined%
\gdef\SetFigFont#1#2#3#4#5{%
  \reset@font\fontsize{#1}{#2pt}%
  \fontfamily{#3}\fontseries{#4}\fontshape{#5}%
  \selectfont}%
\fi\endgroup%
\begin{picture}(5543,1563)(2045,-3712)
\put(2101,-2461){\makebox(0,0)[lb]{\smash{{\SetFigFont{10}{12.0}{\rmdefault}{\mddefault}{\updefault}{\color[rgb]{0,0,0}$C_k$}%
}}}}
\put(5851,-2461){\makebox(0,0)[lb]{\smash{{\SetFigFont{10}{12.0}{\rmdefault}{\mddefault}{\updefault}{\color[rgb]{0,0,0}$C_{k+1}$}%
}}}}
\put(2811,-3621){\makebox(0,0)[lb]{\smash{{\SetFigFont{10}{12.0}{\rmdefault}{\mddefault}{\updefault}{\color[rgb]{0,0,0}$\Sigma_k$}%
}}}}
\end{picture}%
}$$
\end{enumerate}

A simplex of $\Sigma_k$ for some $k$ is called a {\em stable simplex}
in the subdivision $\T$; such a simplex remains unchanged in all
further complexes $C_{k+1}, C_{k+2}, \dots$. The union $\cup_k \Sigma_k$ of stable
simplices in $\T$ forms a chromatic complex, which
we denote by $K(\T)$; it usually has infinitely many vertices. 
Observe that the geometric realization $|K(\T)|$ can be identified with a subset of $|C|$.

In particular, if there exists $k$ such that $\Sigma_k=C_k$, then we must have $C_k = C_{k+1} = \dots$, and $\T$ is just a finite subdivision of $C$; in this case, all the simplices in $C_k$ are stable, and $|K(\T)| = |C|$. At the other extreme, if $\Sigma_k$ is empty for all $k$, then $\T$ consists of the $k\th$ chromatic subdivisions of $C$ for all $k$; in this case, no simplices are stable, and $K(\T)$ is empty.

Stable simplices intend to model processes that have produced outputs 
and thus, intuitively, do not need to communicate among themselves any
longer. Therefore, stable simplices are not further
subdivided. However,  processes with outputs keep participating in
the computation: simplices that contain non-stabilized vertices
continue to be subdivided.  This will allow us to formulate the
conditions of task solvability in non-compact sub-IIS models.    
 

\subsection{The main result}

We are now ready to formulate and prove our main result: a
characterization of taks solvability in sub-IIS models.

Let $M \subseteq \Runs$ be a sub-IIS model on $n+1$
processes. 
Let $\T$ be a terminating subdivision of a pure $n$-dimensional chromatic complex $\I$, and let $\chi: \I \to \s$ be the coloring map. Let $\rho : |\I| \to |\s|$ be the geometric realization (piecewise linear extension) of the map $\chi$. Note that $\rho$ maps vertices of $\Chr^k \I$ to vertices of $\Chr^k \s$ of the same color. 
 
Recall that each vertex $v$ in $\T$ belongs to $\Chr^k(\sigma)$ for some
$k\geq 0$ and some $n$-dimensional simplex $\sigma$ of $\I$. Thus, $\rho(v)$ is a vertex of $\Chr^k(\s)$. If we have a simplex $\tau \in K(\T)$, then $\rho(|\tau|)$ is the convex hull of $\rho(v)$ for $v \in \tau$.

We say that $\T$ is {\em admissible} for $M$ if for 
any run $r \in M$ (viewed as a sequence of simplices $\sigma_0,
\sigma_1, \dots$ in $\s$) and for every $n$-dimensional simplex $\omega$ in $\I$, there exists $k > 0$ and a stable
simplex $\tau\in K(\T)$ such that $|\tau| \subseteq |\omega|$ and $|\sigma_k| \subseteq\rho(|\tau|)$.
The intuition here is that every run of $M$ with inputs $\omega$
should eventually land in a simplex of $\T$.  

\begin{theorem}[GACT]
\label{thm:gact}
A sub-IIS model $M$ solves a task $T=(\I, \O, \Delta)$ if and only if there exists a terminating subdivision $\T$ of $\I$ and a chromatic map $\delta: K(\T) \to \O$ such that:
\begin{enumerate}[(a)]
\item 
$\T$ is admissible for the model $M$;
\item For any simplex $\sigma$ of $\I$, if $\tau$ is a stable simplex of $\T$ such that $|\tau| \subseteq |\sigma|$, then $\delta(\tau) \in \Delta(\sigma)$.
\end{enumerate}

\end{theorem}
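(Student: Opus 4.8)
The plan is to prove the two implications separately, relying throughout on the standard dictionary between immediate-snapshot rounds and chromatic subdivisions (\cite{BG97,Lin10,Koz12}): a run $r$ of IIS together with an input simplex $\omega\in\I$ determines a nested sequence $|\omega|=|\nu_0|\supseteq|\nu_1|\supseteq\cdots$ with $\nu_k\in\Chr^k\omega$, and from $\view(p_i,\omega,k)$ the process $p_i$ can reconstruct the face $\nu_i^{(k)}$ of $\nu_k$ spanned by the values it has seen. Since every $C_k$ in a terminating subdivision is a finite subdivision of $\I$, one can also track, at each round $k$, the minimal carrier $\mu_k\in C_k$ of $|\nu_k|$ and, locally for each process, the minimal carrier $\mu_i^{(k)}\in C_k$ of $|\nu_i^{(k)}|$; here $\mu_{k+1}=\mu_k$ once $\mu_k$ has been terminated, and otherwise $\mu_{k+1}$ is a face of the subdivision of $\mu_k$.

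For the ``if'' direction I would turn $\T$ and $\delta$ into a protocol. Running immediate snapshots on input $\omega$, a process $p_i$ translates its round-$k$ view into the local carrier $\mu_i^{(k)}$ it occupies; as soon as $\mu_i^{(k)}$ is a stable simplex of $\T$ (decidable, since $C_k$ is finite) it outputs $\delta$ of its own vertex there and never revises or moves again. Because terminated simplices are not subdivided, freezing is consistent: $p_i$'s vertex then persists in every $C_m$ for $m\ge k$, and remains the color-$i$ vertex of the eventual configuration. Admissibility of $\T$, together with the fact that $\rho$ restricts to a homeomorphism $|\omega|\to|\s|$, guarantees that the global configuration enters a fixed stable simplex $\tau$ after finitely many rounds; from then on every live process's view-set is a face of $\tau$, hence stable, so every infinitely participating process outputs, once and for all — this is clause (1) of solvability. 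For clause (2), at any round the local carriers of the processes that have output are totally ordered by inclusion (the containment property of immediate snapshots), so their maximum $\mu^{*}$ is a single stable simplex whose carrier in $\I$ has colors among $\part(r)$; by (b) the produced outputs then span a face of $\delta(\mu^{*})\in\Delta\bigl(\omega\cap\chi^{-1}(\part(r))\bigr)$.

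For the ``only if'' direction, let $\Pi$ solve $T$ in $M$. I would build $\T$ inductively: $C_0=\I$, and $\Sigma_k$ is the subcomplex of $C_k$ generated by the simplices that arise as the $C_k$-configuration $\mu_k$ of some run of $M$ (for some input $\omega$ with $|\mu_k|\subseteq|\omega|$) in which every process occurring in $\mu_k$ already holds a view in $\mathrm{domain}(\Pi)$; then $C_{k+1}$ is the partial chromatic subdivision of $C_k$ terminating $\Sigma_k$. The inclusions $\Sigma_k\subseteq\Sigma_{k+1}$ hold because, by clause (1) of solvability, a process that has output keeps it. A vertex of $K(\T)$ is a pair $(p_i,W)$ — a process $p_i$ holding a view $W$ attained in some run of $M$ in which $p_i$ has already output — and I set $\delta(p_i,W)=\Pi(W)$. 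Clause (1) forces $\Pi(W)$ to have color $i$, so $\delta$ is chromatic; clause (2) ensures that on each stable simplex $\tau$ the set $\delta(\tau)$ is a simplex of $\O$ lying in $\Delta$ of the carrier of $\tau$, which makes $\delta$ simplicial and yields (b) after invoking monotonicity of $\Delta$. Admissibility of $\T$ for $M$ follows again from clause (1): in any $r\in M$, for large $k$ the live processes are exactly the infinitely participating ones, all of which have output, so $\mu_k$ is a stable simplex and $\rho(|\mu_k|)$ contains $|\sigma_k|$.

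The main obstacle, in both directions, is the partial termination, which breaks the clean ``one round $=$ one subdivision'' correspondence. In the sufficiency direction one must verify that a process can detect, from its own view alone, that the global configuration has entered a terminated region, that this detection is monotone in the round number, and that the ensuing output is stable and consistent with $\Delta$; in the necessity direction one must choose $\Sigma_k$ large enough to be a subcomplex and to make $\T$ admissible, yet coherent enough for $\delta$ to be well defined. In both cases the real work is the bookkeeping relating the process sets $\part(r),\ipart(r),\fast(r)$ to the colors and carriers of the simplices of $\T$; once that is in place, the two clauses of the definition of solvability translate directly into conditions (a) and (b) of the theorem.
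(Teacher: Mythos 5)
Your necessity direction is essentially the paper's own argument: you terminate exactly those simplices of $C_k$ whose vertices are views, in some run of $M$, that already lie in $\mathit{domain}(\Pi)$, you set $\delta$ equal to $\Pi$ on vertices, and you get admissibility from the fact that for large $k$ the processes still taking steps are exactly the infinitely participating ones, all of which have output. Apart from bookkeeping (e.g.\ your $\Sigma_k$ is phrased via carriers, whose colors may include slow processes without outputs, whereas the paper terminates the embedded view simplices themselves and treats the non-embedded case by noting it already sits inside a previously terminated simplex), this matches the paper.

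The sufficiency direction, however, has a genuine gap. Your protocol makes $p_i$ output as soon as its \emph{own} carrier $\mu_i^{(k)}$ is stable, and you justify clause (2) of solvability by asserting that the carriers of the processes that have output are totally ordered by inclusion. That containment property holds only among the views of the processes of $S_k$ within a single round; your processes commit at \emph{different} rounds, and nothing in hypotheses (a) and (b) prevents a process from landing prematurely in a stable simplex unrelated to the stable simplex in which the run eventually settles. For instance, with two processes, a terminating subdivision may stabilize a middle region early and regions near the corners only later (exactly the pattern of the $R_0,R_1,\dots$ construction for $L_t$): a process whose carrier happens to hit an early stable simplex $\tau_a$ freezes the color-$i$ vertex of $\delta(\tau_a)$, but it keeps taking snapshots, and a single IS exchange moves its view point two thirds of the way toward the other process, so the run can drift out of $|\tau_a|$ and settle in a stable simplex $\tau_b$ sharing no face with $\tau_a$; the other process then outputs from $\delta(\tau_b)$. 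Condition (b) constrains $\delta$ only on each stable simplex separately, so the two frozen outputs need not even span a simplex of $\O$, let alone one in $\Delta\bigl(\omega\cap\chi^{-1}(\part(r))\bigr)$; thus clause (2) can fail for your protocol, and the ``maximum carrier $\mu^{*}$'' you invoke need not exist. The paper's proof avoids this entirely: in each run $r$ it takes the \emph{single} stable simplex $\tau$ furnished by admissibility (with $|\sigma_k|\subseteq\rho(|\tau|)$ for all large $k$) and lets every infinitely participating process output the vertex of $\delta(\tau)$ of its own color, so condition (b) applied to that one $\tau$ yields clause (2). To salvage your local rule you would need an additional lemma --- that the first stable carriers of all processes in a run are faces of one common stable simplex --- which is not implied by (a) and (b); otherwise you should define the outputs run-globally, as the paper does.
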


\begin{proof}
''$\Rightarrow$'' : Suppose $M$ solves $T$ using a protocol $\Pi$. By
induction on the recursion that defines $\view({p_i}, \omega, k)$, it
is easy to see that the $k\th$ view of $p_i$ in a run $r\in M$ (with input
$\omega$) corresponds to a vertex  $v \in \Chr^k(\omega) \subseteq \Chr^k(\I)$ with $\chi (v)=\{p_i\}$. 

We construct a terminating subdivision $\T$ with desirable properties
as follows. We proceed with the standard subdivisions
$\Chr^k(\I)$ for $k=0,1,2, \ldots ,$ and we examine all runs $r \in
M$. At the $k\th$ stage we take the inductively constructed $C_k$,
whose vertices are a subset of the vertices of $\Chr^k(\I)$. We then
terminate those simplices for which $\Pi$ has given an output: A
simplex $\sigma$ of $C_k$ (with $|\sigma| \subseteq |\omega|$) is included in $\Sigma_k$ if there exists a
run $r \in M$ such that the vertices of $\sigma$ are of the form
$v_i=\view({p_i}, \omega, k)$ for that run, and the outputs $\Pi(v_i)$ exist (that is, $v_i\in \mathit{domain}(\Pi)$).  
Then $\Sigma_k$ determines $C_{k+1}$.

Given a simplex $\sigma \in \Sigma_k$ with vertices $v_i$, we set $\delta(\sigma)$ be the simplex with vertices $\Pi(v_i)$. 

Part (a) (admissibility  of $\T$) follows from the construction: Given
any run $r \in M$ and a top-dimensional simplex $\omega$ in $\I$, pick
$k$ such that all the processes infinitely participating in $r$ have
produced output at the $k\th$ step when they are given input from
$\omega$. Let $\sigma_k$ be the corresponding simplex of $\Chr^k \s$.
If $\rho^{-1}(|\sigma_k|) \cap |\omega|$ is an embedded simplex of $C_k$, 
then it is necessarily a stable simplex (because all the
processes have output), and we are done. 
If $\rho^{-1}(|\sigma_k|) \cap |\omega|$ is not an embedded simplex of
$C_k$, then, by construction, it is contained in a simplex of $C_k$  that was
terminated before (because some of the processes have produced outputs at an earlier time).

Part (b) follows from the fact that $M$ solves $T$ using $\Pi$.
\medskip

``$\Leftarrow$'' : Conversely, suppose there exists a terminating
subdivision $\T$ and a map $\delta: K(\T) \to \O$ as in the statement
of the theorem. We construct a protocol $\Pi$ 
by which $M$ solves $T$.
Suppose we have a run $r \in M$, corresponding to a sequence of simplices $\sigma_0 \subseteq \sigma_1 \subseteq \sigma_2 \subset \ldots$ 
Since $\T$ is admissible for $M$, for each input $\omega$ there exists
a stable simplex $\tau$ such that  $|\tau| \subseteq |\omega|$ and
$|\sigma_k| \subseteq \rho (|\tau|)$ for all $k \gg 0$.  Given a
process $p_i \in \ipart(r)$, we can assign it as output value the
vertex of $\delta(\tau)$ that has color $p_i$. Now, $p_i$ may obtain
an output value (necessarily the same as before) through another run,
at a different step $k$. We take the minimum over all such $k$, to
obtain the value $k_0$ needed in the definition of $\Pi$. 
Condition (b) implies that $\Pi$ solves $T$. 
\end{proof}

\section{The wait-free model}
\label{sec:wf}

For the wait-free model $WF$, let us see how we can derive the original Asynchronous
Computability Theorem of~\cite{HS99}. Indeed, Theorem~\ref{thm:gact} has the following:
\begin{corollary}
\label{cor:act}
A task $T=(\I, \O, \Delta)$ is solvable in the wait-free model if and only if there exists $k \geq 0$ and a chromatic map $\eta: \Chr^k \I \to \O$ such that, for any simplex $\tau \subseteq \I$ and any subsimplex $\sigma$ of $\Chr^k \tau \subset \Chr^k \I$, we have $\eta(\sigma) \in \Delta(\tau)$.
\end{corollary}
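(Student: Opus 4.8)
The plan is to derive Corollary~\ref{cor:act} from Theorem~\ref{thm:gact} by specializing the notion of terminating subdivision to the wait-free model. The key observation is that $WF = \Runs$ contains \emph{every} run, including those in which a single process runs solo forever while all others are frozen. Such a run forces any admissible terminating subdivision to be, in fact, a \emph{finite} subdivision: there can be no non-stable simplices left at any stage.

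First I would argue the ``only if'' direction. Suppose $T$ is solvable in $WF$. By Theorem~\ref{thm:gact} there is a terminating subdivision $\T = (C_0, C_1, \dots; \Sigma_0 \subseteq \Sigma_1 \subseteq \dots)$ of $\I$ and a chromatic map $\delta: K(\T) \to \O$ satisfying admissibility (a) and the compatibility condition (b). The crucial claim is that some $\Sigma_k = C_k$, so that $\T$ stabilizes and $K(\T)$ is just $\Chr^k \I$ for the appropriate $k$ (after observing $C_k = \Chr^k \I$ whenever nothing has been terminated yet — more care needed, see below). To see this, fix any $n$-simplex $\omega$ of $\I$, and consider, for each vertex color $i$, the ``solo'' run $r_i$ with $S_1 = S_2 = \dots = \{p_i\}$. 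Admissibility applied to $r_i$ gives a stable simplex $\tau_i$ with $|\tau_i| \subseteq |\omega|$ containing $\rho(|\sigma_k|)$ for large $k$; but the solo run's simplex $\sigma_k$ is just the vertex colored $i$ at the ``corner'' of $\Chr^k \omega$ corresponding to running $i$ alone. More usefully, take runs in which an arbitrary \emph{set} of processes runs together forever in lockstep while the rest are frozen — these show every point of $|\I|$ is covered by some stable simplex, hence $|K(\T)| = |\I|$. Since $K(\T) = \cup_k \Sigma_k$ and the $\Sigma_k$ are nested finite complexes whose union has geometric realization all of $|\I|$ (which is compact), finitely many suffice, so $\Sigma_k = C_k$ for some $k$, and then $C_k = C_{k+1} = \dots$ is a genuine finite subdivision of $\I$. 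By the explicit description of $C_{k+1}$ in terms of $C_k$ (terminating nothing yet gives the full chromatic subdivision), induction shows $C_k = \Chr^k \I$. Setting $\eta = \delta : \Chr^k \I \to \O$ and unwinding condition (b) — every simplex of $\Chr^k \I$ is stable, so for $\sigma \subseteq \Chr^k \tau$ with $\tau \subseteq \I$ we get $\eta(\sigma) \in \Delta(\tau)$ — yields exactly the statement of the corollary.

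Conversely, for the ``if'' direction, given $k$ and a chromatic map $\eta: \Chr^k \I \to \O$ with the stated property, I construct the terminating subdivision $\T$ that performs $k$ full chromatic subdivisions and then terminates everything: $C_j = \Chr^j \I$ and $\Sigma_j = \emptyset$ for $j < k$, and $\Sigma_k = C_k = \Chr^k \I$ (hence $C_j = \Chr^k \I$, $\Sigma_j = \Chr^k \I$ for all $j \geq k$). Then $K(\T) = \Chr^k \I$ and we set $\delta = \eta$. Admissibility for $WF$ is automatic: for \emph{any} run $r$ and any input $\omega$, after $k$ steps the run's simplex $\sigma_k \in \Chr^k \s$ satisfies $|\sigma_k| \subseteq \rho(|\tau|)$ where $\tau$ is the unique simplex of $\Chr^k \omega$ with $\rho(|\tau|) \supseteq |\sigma_k|$ — and $\tau$ is stable since all of $\Chr^k \I$ is. Condition (b) is precisely the hypothesis on $\eta$. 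So Theorem~\ref{thm:gact} gives wait-free solvability.

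The \textbf{main obstacle} is the ``only if'' direction — specifically, making rigorous the passage from ``admissibility against all solo/lockstep runs'' to ``$\T$ stabilizes after finitely many steps.'' One must (i) choose the right family of runs to probe every region of $|\I|$, (ii) invoke compactness of $|\I|$ together with the fact that the nested subcomplexes $\Sigma_k$ are finite (the paper assumes local finiteness, and $\I$ is finite, so each $C_k$ and $\Sigma_k$ is finite), to conclude that $|K(\T)| = |\I|$ forces $\Sigma_k = C_k$ for some $k$, and (iii) verify the bookkeeping identity $C_k = \Chr^k \I$ when no simplex has yet been terminated, which is immediate from the definition of the partial chromatic subdivision with $\Sigma_j = \emptyset$. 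Everything else is routine unwinding of definitions.
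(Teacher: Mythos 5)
Your ``if'' direction is fine and coincides with the paper's: take the terminating subdivision with $\Sigma_j=\emptyset$ for $j<k$ and $\Sigma_k=C_k=\Chr^k\I$, set $\delta=\eta$, and apply Theorem~\ref{thm:gact}. The problem is the ``only if'' direction, where your plan has two genuine gaps. First, the compactness inference is invalid: from $|K(\T)|=\bigcup_k|\Sigma_k|=|\I|$ with the $|\Sigma_k|$ nested closed sets you cannot conclude that some $|\Sigma_k|$ already equals $|\I|$ --- an increasing union of closed sets covering a compact space need not stabilize (compare $[0,1]=\{0\}\cup\bigcup_k[1/k,1]$); compactness only helps with \emph{open} covers. (Also, the solo/lockstep runs you probe with only realize a countable set of affine projections, so by themselves they do not show every point of $|\I|$ lies in a stable simplex; you would need all runs, i.e.\ surjectivity of the affine projection, together with the fact that $\rho$ restricted to $|\omega|$ is a homeomorphism onto $|\s|$.) Second, and more seriously, even granting that $\Sigma_k=C_k$ for some $k$, the bookkeeping claim $C_k=\Chr^k\I$ is false in general: the terminating subdivision supplied by Theorem~\ref{thm:gact} terminates simplices as soon as the protocol has produced outputs, so typically $\Sigma_j\neq\emptyset$ for some $j<k$ while $\Sigma_j\neq C_j$; then $C_{j+1}$ is a \emph{partial} chromatic subdivision, $K(\T)$ is a complex of mixed coarseness (its simplices need not even be simplices of any $\Chr^m\I$), and ``$\eta=\delta$'' does not typecheck as a map on $\Chr^k\I$. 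Your induction ``terminating nothing yet gives the full chromatic subdivision'' only applies when nothing has been terminated before stage $k$, which there is no reason to assume.

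The paper's proof avoids both issues by working with the run space rather than with $|\I|$: for each run $r$ admissibility gives a step $k(r)$ by which $|\sigma_{k(r)}|$ lies in (the $\rho$-image of) a stable simplex for every input $\omega$; the sets $\Runs_k=\{r\mid k(r)\le k\}$ are open and increasing, and compactness of $\Runs$ (Lemma~\ref{lem:compact}) yields a \emph{uniform} $k$ with $\Runs_k=\Runs$. One then defines $\eta$ on $\Chr^k\I$ by $\eta(\sigma)=\delta(\tau)$, where $\tau\in K(\T)$ is the minimal stable simplex with $|\sigma|\subseteq|\tau|$ --- that is, one composes $\delta$ with a coarsening, rather than identifying $K(\T)$ with $\Chr^k\I$. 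Condition (b) of GACT then gives $\eta(\sigma)\in\Delta(\tau_0)$ for $\sigma\subseteq\Chr^k\tau_0$. If you want to salvage your route, you would have to replace the compactness-of-$|\I|$ step by exactly this compactness-of-$\Runs$ argument, and replace ``$\eta=\delta$'' by the coarsening construction; as written, the passage from admissibility to ``$K(\T)=\Chr^k\I$'' does not hold.
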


\begin{proof}
If $\eta: \Chr^k \I \to C$ exists, solvability of $T$ follows from GACT because $\Chr^k \I$ (with all the vertices terminated at the $k\th$ step) is a terminating subdivision that is admissible for $\WF$. 

Conversely, suppose that $T$ is wait-free solvable. GACT provides a terminating subdivision $\T$ that is admissible for $\WF$, and a map $\delta: K(\T) \to \O$. For each run $r \in \WF=\Runs$ and top-dimensional input simplex $\omega \in I$, there is a $k$ such that $|\sigma_k|$ is contained in a stable simplex $\tau$ of $\T$ with $|\tau| \subseteq |\omega|$. Since there are finitely many possibilities for $\omega$, we can find a $k=k(r)$ that works for all $\omega$.  Let $\Runs_k$ be the set of runs $r$ for which $k(r) \leq k$. We have inclusions $\Runs_0 \subseteq \Runs_1 \subseteq \Runs_2 \subseteq \dots$, and each $\Runs_k$ is open in the metric on $\Runs$ introduced in Section~\ref{sec:top}. We know from Lemma~\ref{lem:compact} that the set $\Runs$ is compact. Hence, the open cover ${\Runs_k}$ of $\Runs$ admits a finite subcover; i.e., there exists $k$ such that $\Runs_k = \Runs$. We now define the desired map $\eta: \Chr^k \I \to \O$ by setting $\eta(\sigma) =\delta(\tau)$, where $\tau \in K(\T)$ is the minimal simplex with $|\sigma| \subseteq |\tau|$.
\end{proof}

As stated in \cite{HS99}, ACT characterizes solvability in terms of a map from an arbitrary colored subdivision of $\I$ to the output complex. That any colored subdivided simplex can be approximated by $\Chr^k(\I)$ for some $k$ large enough is a purely topological result, proved in \cite{HS99}, and which can be used here verbatim. (This corresponds to Step 4 in the outline of the proof of ACT from the Introduction.)

\section{Simplicial approximation}
\label{sec:proofs}

To be able to apply GACT, we need a tool for constructing chromatic
maps between two chromatic complexes $A$ and $B$ (subject to some
boundary conditions). In many cases, it is easier
to first construct a continuous map $f: |A| \to |B|$.  Standard
results in algebraic topology (reviewed in Subsection~\ref{sec:classical} below) say that after replacing $A$ by a fine enough subdivision, we can deform $f$ into a geometric realization of
a simplicial map. Such a map may collapse the dimension of simplices, so it is not always clear how to turn it into a chromatic map. However, if we impose an additional condition (link-connectedness for the target), we will show in Subsection~\ref{sec:ca} that one can do the approximation using chromatic maps.

\subsection{Classical results}
\label{sec:classical} 
Let $A$ and $B$ be simplicial complexes and let $f: |A| \to |B|$ be a continuous map between their geometric realizations. If $A'$ is a subdivision of $A$, a simplicial map $\phi: A' \to B$ is called a {\em simplicial approximation} to $f$ if for every $x \in |A| = |A'|$ and $\sigma \in B$ we have
 $$f(x) \in |\sigma| \ \Rightarrow \ |\phi|(x) \in |\sigma|.$$

Roughly, the simplicial approximation theorem says that every continuous map between simplicial complexes can be approximated by a simplicial map. There are several versions of this in the literature. For finite simplicial complexes, we have:

\begin{theorem}
\label{thm:fSA}
Let $A$ and $B$ be simplicial complexes such that $A$ is finite, and let  $f: |A| \to |B|$ be a continuous map. Then: 
\begin{enumerate}[(a)]
\item There exists an integer $N$ such that for all $n \geq N$, the map $f$ admits a simplicial approximation $\phi: \Bary^n(A) \to B$. 

\item Furthermore, if we have a subcomplex $C \subseteq A$ such that the restriction of $f$ to $C$ is the geometric realization of a simplicial map $g: C \to B$, then the approximation $\phi$ can be taken so that the restriction of $\phi$ to $|C|$ equals $|g|$. 
\end{enumerate}
\end{theorem}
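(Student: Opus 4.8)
The plan is to prove this as a standard consequence of the Lebesgue number lemma together with the star condition for simplicial approximations, and then to handle the relative version in part (b) by being careful about which simplices already carry the map $|g|$. Recall the classical fact (from the star condition, see \cite[Ch.~3]{Spanier}) that a vertex map $\phi : A' \to B$ sending a vertex $w$ of $A'$ to a vertex $v$ of $B$ with $f(|\st w|) \subseteq |\st v|$ extends to a simplicial approximation of $f$. So the whole question reduces to: for a fine enough subdivision $A'$, can we, for every vertex $w$ of $A'$, choose a vertex $v$ of $B$ with $f(|\st w|) \subseteq |\st v|$?

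First I would invoke compactness. The open stars $\{|\st v| : v \in B^{(0)}\}$ form an open cover of $|B|$, hence $\{f^{-1}(|\st v|)\}$ is an open cover of $|A|$, which is a compact metric space (it is finite). Let $\lambda > 0$ be a Lebesgue number for this cover. Next, the mesh (maximal diameter of a simplex) of $\Bary^n(A)$ tends to $0$ as $n \to \infty$; this is the standard estimate that barycentric subdivision of an $m$-simplex shrinks diameters by a factor $m/(m+1)$, iterated. Hence there is an integer $N$ so that for all $n \geq N$, every simplex of $\Bary^n(A)$ — in particular every closed star $|\St w|$ of a vertex $w$ (a union of finitely many simplices each of small diameter, all sharing $w$) — has diameter $< \lambda$. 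Actually I would bound $\mathrm{diam}(|\st w|) \le 2\,\mathrm{mesh}(\Bary^n A)$ and choose $N$ so that this is $< \lambda$. Then by the defining property of a Lebesgue number, $f(|\st w|)$ lies inside some $|\st v|$; pick one such $v$ and call it $\phi(w)$. By the star condition this $\phi$ is a simplicial approximation to $f$, proving (a).

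For part (b), the subtlety — and the main thing to be careful about — is that on $|C|$ the map $f$ already agrees with $|g|$, and I must arrange $\phi|_{|C|} = g|_{C'}$ where $C'$ is the induced subdivision, rather than just some other approximation. The idea is to make the vertex choices on $C'$ first and consistently with $g$, then extend. For a vertex $w$ of $C' = \Bary^n(C) \subseteq \Bary^n(A)$, the carrier of $w$ in $C$ is some simplex $\gamma \in C$, and $|g|(|\st_{C'} w|) \subseteq |g|(|\gamma|) = |g(\gamma)|$; since $w$ lies in the interior of $\gamma$ relative to $C$ (or on a face handled inductively by dimension), one checks $|g|(|\st_{C'} w|)$ lies in the open star of at least one vertex $v$ of $g(\gamma)$, and I set $\phi(w) := v$; this forces $|\phi|$ on $|C|$ to equal $|g|$ because a simplicial map is determined by its vertex values and $g$ itself is realized this way. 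For a vertex $w$ of $\Bary^n(A)$ not in $C'$, its star $|\st w|$ need not be contained in $|C|$, so I use the argument from part (a) verbatim, but I must check the two choices are compatible along $C'$ — i.e., that the $v$ chosen for a $C'$-vertex via $g$ still satisfies $f(|\st_{\Bary^n A} w|) \subseteq |\st v|$, where the star is now the larger one in $A$. This is where a slightly larger $N$ may be needed: shrink the mesh enough that $\mathrm{diam}(f(|\st_{\Bary^n A} w|)) < \lambda$ for \emph{all} vertices $w$, and simultaneously, for $w \in C'$, that $f(|\st w|)$ — being close to $|g|(|\st_{C'} w|)\subseteq |g(\gamma)|$ and to a point of $|\st v|$ — actually lands in $|\st v|$. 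Concretely, one can choose $v$ to be the vertex of $g(\gamma)$ nearest (in barycentric coordinates) to the image of the barycenter, and push $N$ up until the diameter bound guarantees the containment. The main obstacle is thus purely this compatibility bookkeeping at the interface between $C'$ and the rest of $\Bary^n(A)$; once the mesh is taken small enough it goes through, and the resulting $\phi$ satisfies both conclusions.
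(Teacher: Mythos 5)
Part (a) of your proposal is correct: the Lebesgue number lemma applied to the open cover $\{f^{-1}(|\operatorname{st} v|)\}$ of the compact space $|A|$, the mesh estimate for iterated barycentric subdivision, and the star condition together give the standard textbook proof. The paper itself offers no argument here (it cites Spanier, Theorem 8, p.~128, together with Lemma 1, p.~126), so your self-contained treatment of (a) is fine and consistent with that reference.

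Part (b) has a genuine gap, at the step ``this forces $|\phi|$ on $|C|$ to equal $|g|$.'' Choosing $\phi(w)$ to be a vertex of $g(\gamma)$, where $\gamma$ is the carrier of $w$ in $C$, only guarantees that $\phi$ restricted to $\Bary^n(C)$ is a simplicial \emph{approximation} to $|g|$; it cannot force equality, and in fact equality is impossible for the map you build. Concretely, let $C$ be a single $1$-simplex of $A$ mapped by $g$ isomorphically onto an edge $[v_0,v_1]$ of $B$. For $n \geq 1$ the barycenter $m$ of that edge is a vertex of $\Bary^n(C)$, so any simplicial map $\phi: \Bary^n(A) \to B$ satisfies $|\phi|(m) = \phi(m) \in \{ \text{vertices of } B\}$, whereas $|g|(m)$ is the midpoint of $[v_0,v_1]$; hence $|\phi|$ differs from $|g|$ on $|C|$ no matter how the vertex images are chosen. (Incidentally, the interface issue you single out as the main obstacle is not the problem: since $f(w) = |g|(w)$ has carrier $g(\gamma)$, any vertex $v$ with $f(|\operatorname{st} w|) \subseteq |\operatorname{st} v|$ automatically lies in $g(\gamma)$, so no extra shrinking of the mesh is needed for compatibility.) To get a relative statement of the kind asserted one needs a different idea: either subdivide $A$ \emph{relative to} $C$, leaving $C$ unsubdivided away from which the refinement happens (Zeeman's relative simplicial approximation theorem), so that $\phi$ can literally restrict to $g$ on $C$ --- note this is no longer a barycentric subdivision of all of $A$ --- or weaken the conclusion to the pair version that the cited theorem of Spanier actually provides, in which the restriction of $\phi$ to the subdivided $C$ is an approximation to $|g|$ rather than equal to it. Your argument supplies neither, so as written part (b) does not go through.
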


Part (a) of this result is a special case of Theorem 8 in \cite[p.128]{Spanier}. The theorem is stated in \cite{Spanier} in more generality, for pairs of simplicial complexes. Part (b) above follows from this more 
statement, taking into account Lemma 1 in \cite[p.126]{Spanier}.

In this paper we will need a different variant of the simplicial approximation theorem, one that applies without the hypothesis that $A$ is finite:
\begin{theorem}
\label{thm:infSA}
Let $A$ and $B$ be simplicial complexes, and let  $f: |A| \to |B|$ be a continuous map. Then: 
\begin{enumerate}[(a)]
\item There exists a subdivision $A'$ of $A$ such that the map $f$ admits a simplicial approximation $\phi: A' \to B$. 
\item Furthermore, if we have a subcomplex $C \subseteq A$ such that the restriction of $f$ to $C$ is the geometric realization of a simplicial map $g: C \to B$, then the approximation $\phi$ can be taken so that the restriction of $\phi$ to $|C|$ equals $|g|$. 
\end{enumerate}
\end{theorem}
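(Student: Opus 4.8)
The plan is to mimic the classical proof of simplicial approximation, localizing the only place where finiteness of $A$ enters. Let $\mathcal{U}$ be the open cover of $|A|$ consisting of the sets $f^{-1}(\st b)$, for $b$ ranging over the vertices of $B$ (here $\st b \subseteq |B|$ denotes the open star of the vertex $b$). The classical construction observes that if $A'$ is \emph{any} subdivision of $A$ such that every open star $\st_{A'}(v)$, $v$ a vertex of $A'$, lies in some member of $\mathcal{U}$, then choosing for each $v$ a vertex $\phi(v)$ of $B$ with $\st_{A'}(v) \subseteq f^{-1}(\st \phi(v))$ produces (by the usual ``intersecting stars span a simplex'' argument, which uses no finiteness) a simplicial map $\phi: A' \to B$, and this $\phi$ is a simplicial approximation to $f$. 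So part (a) reduces to a purely combinatorial statement: for $A$ locally finite, every open cover of $|A|$ is refined by the open-star cover of some subdivision $A'$ of $A$.

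To prove the latter, I would first treat the connected components of $A$ separately (they are open and closed in $|A|$, so the subdivisions assemble). A connected locally finite complex is countable, so we may write $A = \bigcup_{i \ge 1} K_i$ as an increasing union of finite subcomplexes with the closed star $\St(K_i)$ contained in $K_{i+1}$; take $K_1$ to be the closed star of a vertex and $K_{i+1} = \St(K_i)$, and note that local finiteness makes each $K_i$ finite. Then build the desired subdivision as the limit of a sequence of subdivisions $A = A'_0, A'_1, A'_2, \dots$ of $A$: at stage $i$ pass from $A'_{i-1}$ to $A'_i$ by further subdividing only those simplices of $A'_{i-1}$ whose carrier in $A$ lies in $K_{i+1}$ but not in $K_{i-1}$, using enough rounds of barycentric subdivision that every vertex of $A'_i$ whose carrier lies in $K_i$ acquires an open star inside a member of $\mathcal{U}$; this is possible because only finitely many vertices are involved and $\mathcal{U}$ has a Lebesgue number on the compact set $|K_{i+1}|$. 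Since a simplex with carrier in $K_{i-1}$ is left untouched from stage $i$ on, and since refining a complex only shrinks the open stars of its existing vertices, the $A'_i$ stabilize over every compact subset of $|A|$ and hence define a subdivision $A'$ of $A$, all of whose vertices have stars inside members of $\mathcal{U}$, as required. For part (b), one runs the same scheme \emph{relative to} $C$: no simplex of $C$ is ever subdivided, and on each finite piece one invokes the relative finite simplicial approximation theorem, Theorem~\ref{thm:fSA}(b), with subcomplex $C \cap K_i$. The hypothesis $f|_{|C|} = |g|$ with $g$ simplicial gives $|g|(\st_C c) \subseteq \st g(c)$, so the (un-subdivided) open stars of the vertices $c$ of $C$ already lie in members of $\mathcal{U}$; this is exactly what legitimizes performing the refinements relative to $C$, and it lets us pin $\phi$ to $g$ on $C$.

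The main obstacle is entirely one of bookkeeping forced by the non-compactness of $A$: the infinitely many local refinements on the pieces $K_i$ must be chosen so as to be mutually compatible and to stabilize on every compact set, which is precisely what the ``closed-star padding'' $\St(K_i) \subseteq K_{i+1}$ and the rule ``subdivide only outside the region already made good'' are designed to achieve; and, for part (b), these refinements must be carried out without ever touching $C$ while still controlling the stars of the vertices of $C$ — this is the content of the relative assertion in Theorem~\ref{thm:fSA}(b). No topology beyond Theorem~\ref{thm:fSA} and the compactness of the finite pieces $|K_i|$ is needed; the work lies in globalizing the local choices over the non-compact complex $A$.
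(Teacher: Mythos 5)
Your part (a) is correct in outline and takes a genuinely different route from the paper. The paper deduces the infinite case from Theorem~\ref{thm:fSA}: it exhausts $A$ by finite subcomplexes, extends the approximation stage by stage using the relative statement of Theorem~\ref{thm:fSA}(b), and then glues subdivisions of different fineness by inserting extra simplices. You instead globalize the classical star-cover argument, reducing (a) to the claim that the cover $\{f^{-1}(\st b)\}$ is refined by the vertex-star cover of some subdivision of the locally finite complex $A$, and you build that subdivision by staged refinements along an exhaustion $K_i$ with $\St(K_i)\subseteq K_{i+1}$. This yields a self-contained proof of (a) that never needs the relative finite theorem; the one point you must make explicit is that ``subdividing only the simplices with carrier in $K_{i+1}\setminus K_{i-1}$'' has to be performed as a subdivision \emph{relative to} the untouched simplices (starring only at barycenters of the simplices being refined), since a plain barycentric subdivision of the selected simplices would subdivide shared faces and destroy the simplicial-complex structure. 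This is the same subtlety the paper handles by introducing additional simplices; with relative subdivisions your stabilization and star-shrinking claims do go through.

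Part (b) has a genuine gap. To pin $\phi(c)=g(c)$ at a vertex $c$ of $C$, the star mechanism needs $f(\st_{A'} c)\subseteq \st g(c)$, where the star is taken in the final subdivision $A'$; simpliciality of $g$ only gives $f(\st_{C} c)\subseteq \st g(c)$, and these are very different because you never subdivide $C$. Whenever a positive-dimensional simplex $\tau\in C$ is a face of a simplex of $A$ not in $C$, every subdivision rel $C$ must contain a simplex $\rho$ having all of $\tau$ as a face; the interior of $\rho$ lies in $\st_{A'} c$ for each vertex $c$ of $\tau$ and contains points arbitrarily close to the opposite vertices of $\tau$, where continuity only forces $f$ to be near $|g|(|\tau|)$, i.e.\ near the \emph{closed} star of $g(c)$, not inside the open star; further subdivision away from $C$ never removes such points from $\st_{A'} c$. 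This is not a lost estimate but a real obstruction: with $A=[c,b,e]$, $C=[c,b]$, $B=[\gamma,\beta]$, $g(c)=\gamma$, $g(b)=\beta$, and $f$ equal to $|g|$ on the edge but equal to $\beta$ on a sequence of interior points approaching $b$ tangentially to $[c,b]$, the simplex of $A'$ containing $[c,b]$ as a face contains such points in its interior, and the carrier condition there is incompatible with $\phi(c)=\gamma$. So the sentence claiming that the un-subdivided stars of $C$ ``legitimize'' the relative refinement is exactly where the argument breaks. Your fallback appeal to Theorem~\ref{thm:fSA}(b) also does not mesh with the scheme: that statement subdivides the whole finite piece (including $C\cap K_i$), and to make the stagewise maps agree the prescribed subcomplex must include the region already handled at earlier stages, not just $C\cap K_i$ --- which is precisely the paper's inductive-extension route. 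If you want to keep the star framework, the conclusion near $C$ must be weakened to a Zeeman-type relative approximation (agreement on $C$ plus control up to homotopy rel $|C|$) rather than the strict carrier condition at the vertices of $C$.
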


Theorem~\ref{thm:infSA} is mentioned in the remarks at the bottom of p.128 in \cite{Spanier}; see \cite{Whitehead} or \cite{Munkres84} for a more complete treatment. 

Note that, in the case when $A$ is countable and locally finite, we can deduce Theorem~\ref{thm:infSA} from Theorem~\ref{thm:fSA} as follows. Let us write $A$ as a union of finite simplicial complexes $A_1 \subseteq A_2 \subseteq \dots$ We construct the subdivision $A'$ and the map $\phi$ inductively. Suppose we found a simplicial approximation $\phi_k : \Bary^{n_k}(A_k) \to B$ for the restriction of $f$ to $A_k$. Consider the restriction of $f$ to $|A_{k+1}| = |\Bary^{n_k}(A_{k+1})|$. We extend the approximation $\phi_k$ to $A_{k+1}$ by using part (b) of Theorem~\ref{thm:fSA}, applied to $\Bary^{n_k}(A_{k+1})$ and $B$. The result is a simplicial approximation $\phi_{k+1}: \Bary^{n_{k+1}}(A_{k+1}) \to B$ for some $n_{k+1} \geq n_k$, such that $|\phi_k|=|\phi_{k+1}|$ on $|A_k|$. The desired approximation $\phi: A' \to B$ has $|\phi| = |\phi_k|$ on each $|A_k|$. A subtle point here is the construction of the subdivision $A'$, which is getting finer and finer as we go towards infinity. In principle, we would like $A'$ to be $\Bary^{n_k}(A_k)$ on each $|A_k| \setminus |A_{k-1}|$. This is not a simplicial complex, but we can turn it into one by introducing additional simplices, as shown in the figure:
$$\scalebox{.8}{\begin{picture}(0,0)%
\includegraphics{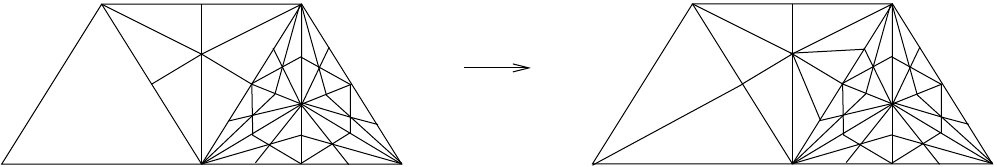}%
\end{picture}%
\setlength{\unitlength}{2526sp}%
\begingroup\makeatletter\ifx\SetFigFont\undefined%
\gdef\SetFigFont#1#2#3#4#5{%
  \reset@font\fontsize{#1}{#2pt}%
  \fontfamily{#3}\fontseries{#4}\fontshape{#5}%
  \selectfont}%
\fi\endgroup%
\begin{picture}(7458,1234)(4564,-3381)
\end{picture}%
}$$
The local finiteness of $A$ ensures that there is an upper bound on the number of times we have to subdivide each simplex.

\subsection{Chromatic approximations}
\label{sec:ca}

Let us go back to Theorem~\ref{thm:fSA}. Note that if $A$ is a chromatic complex, then instead of the barycentric subdivisions $\Bary^n(A)$, one could take standard chromatic subdivisions $\Chr^n(A)$. (The same proof applies.) However, we cannot a priori conclude that the simplicial approximation is a chromatic simplicial map. For example, if the continuous map $f$ collapses a simplex of $A$ to a single vertex in $B$, then any simplicial approximation would do the same, but on the other hand chromatic maps are non-collapsing.

Nevertheless, we can avoid collapsing by assuming that the following property (for the target complex $B$):

\begin{definition}[Definition 4.14 in \cite{HS99}] \label{def:link}
A pure $n$-dimensional complex $B$ is called {\em link-connected} if for all simplices $\sigma \in B$, the link of $\sigma$ in $B$ is $(n-\dim(\sigma)-2)$-connected. 
\end{definition}

For example, the output complex $L^{\ord}$ for the total order task on three processes is not link-connected, because the link (in $L^{\ord}$) of a vertex of $\s$ is not connected. 

 A variant of Theorem~\ref{thm:infSA} for chromatic maps is proved in \cite[Lemma 4.21]{HS99} under the assumptions that $A$ and $B$ are chromatic complexes, $B$ is link-connected, $A$ is a finite subdivision of the standard simplex, and $C$ is the boundary of $A$. The conclusion is that the map $g$ can be taken to be chromatic. Furthermore, Theorem 5.29 in \cite{HS99}
shows that, under the same assumptions, the subdivision $A'$ of $A$ can be taken to be a standard chromatic subdivision; this yields a chromatic variant of Theorem~\ref{thm:fSA}.

One can generalize the results of Herlihy and Shavit to the setting of infinite complexes:

\begin{theorem}
\label{thm:chrSA}
Let $A$ and $B$ be chromatic simplicial complexes, and let  $f: |A| \to |B|$ be a continuous map. Suppose that $A$ is countable and locally finite, and that $B$ is link-connected. Then: 
\begin{enumerate}[(a)]
\item There exists a subdivision $A'$ of $A$ such that the map $f$ admits a chromatic simplicial approximation $\phi: A' \to B$. 
\item Furthermore, if we have a subcomplex $C \subseteq A$ such that the restriction of $f$ to $C$ is the geometric realization of a chromatic simplicial map $g: C \to B$, then the approximation $\phi$ can be taken so that the restriction of $\phi$ to $|C|$ equals $|g|$. 
\end{enumerate}
\end{theorem}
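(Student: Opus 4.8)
The plan is to combine the infinite simplicial approximation theorem (Theorem~\ref{thm:infSA}) with the finite chromatic approximation results of Herlihy and Shavit, following exactly the same exhaustion-by-finite-subcomplexes strategy used to deduce Theorem~\ref{thm:infSA} from Theorem~\ref{thm:fSA}. First I would write the countable, locally finite complex $A$ as an increasing union of finite subcomplexes $A_1 \subseteq A_2 \subseteq \dots$; by enlarging each $A_k$ slightly we may assume each $A_k$ is a union of closed stars, so that $A_{k+1}$ contains a neighborhood of $A_k$ and $C \cap A_k$ exhausts $C$. The construction of $\phi$ proceeds inductively: given a chromatic simplicial approximation $\phi_k : \Chr^{n_k}(A_k) \to B$ to $f|_{|A_k|}$ (agreeing with $|g|$ on $|C \cap A_k|$), we extend it over $A_{k+1}$ using the finite chromatic approximation theorem of \cite{HS99} (the chromatic analogue of part (b) of Theorem~\ref{thm:fSA}), applied with the already-subdivided-and-mapped subcomplex $\Chr^{n_k}(A_k) \cup (C \cap A_{k+1})$ playing the role of the fixed subcomplex on which the map is prescribed. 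This yields $\phi_{k+1} : \Chr^{n_{k+1}}(A_{k+1}) \to B$ with $n_{k+1} \geq n_k$ and $|\phi_{k+1}| = |\phi_k|$ on $|A_k|$ and $|\phi_{k+1}| = |g|$ on $|C \cap A_{k+1}|$. Passing to the limit gives the subdivision $A'$ (with $A' = \Chr^{n_k}(A_k)$ on $|A_k| \setminus |A_{k-1}|$, patched into an honest simplicial complex by the interpolation trick pictured in the discussion after Theorem~\ref{thm:infSA}) and the desired chromatic simplicial map $\phi : A' \to B$ with $|\phi| = |g|$ on $|C|$.

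The one genuinely new ingredient is that the finite chromatic approximation result in \cite{HS99} is stated only for $A$ a finite subdivision of the standard simplex $\s$ and $C = \partial A$, whereas here the inductive step needs it for an arbitrary finite chromatic complex $A_k$ with an arbitrary chromatic subcomplex on which the map is prescribed. So a preliminary step is to upgrade Lemma~4.21 and Theorem~5.29 of \cite{HS99} to this generality. This is routine: one handles $A_{k+1}$ one top-dimensional simplex $\sigma$ at a time, in order of increasing dimension of $\sigma \cap (\text{already-processed part})$; on each $\sigma$ the boundary data $\partial \sigma$ is already a chromatic simplicial map into $B$, and link-connectedness of $B$ is exactly the hypothesis that lets one extend a chromatic map from $\partial \sigma \cong \partial \s$ over $\sigma$ without collapsing (this is the content of the Herlihy--Shavit argument, which is local to a single simplex and its image link). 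Gluing these simplex-by-simplex extensions is consistent because they agree on overlaps by construction.

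The main obstacle I expect is bookkeeping rather than conceptual: one must ensure that the subdivision levels $n_k$ chosen on $A_k$ are compatible with the extension over $A_{k+1}$, i.e. that extending does not force re-subdividing $A_k$, and that the resulting "telescoping" subdivision $A'$ is genuinely locally finite and simplicial. Local finiteness of $A$ guarantees (as in the proof sketch after Theorem~\ref{thm:infSA}) that each simplex of $A$ gets subdivided only finitely many times, so $A'$ is well-defined; and the agreement $|\phi_{k+1}| = |\phi_k|$ on $|A_k|$, enforced by feeding $\phi_k$ in as fixed boundary data at each stage, guarantees the maps $|\phi_k|$ patch to a continuous, piecewise-linear $|\phi|$ which is the geometric realization of a chromatic simplicial map $\phi : A' \to B$. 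That $\phi$ is a simplicial approximation to $f$ follows because each $\phi_k$ is one on $|A_k|$ and the star condition is checked locally. This completes the proof.
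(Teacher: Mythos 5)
Your proposal rests on the same essential ingredients as the paper's proof --- the finite chromatic approximation/extension results of Herlihy and Shavit (Lemma~4.21 and Theorem~5.29 of \cite{HS99}), with link-connectedness of $B$ preventing collapsing and local finiteness of $A$ controlling the subdivisions --- but it is organized differently. You exhaust $A$ by finite subcomplexes $A_1 \subseteq A_2 \subseteq \dots$, mirroring the paper's deduction of Theorem~\ref{thm:infSA} from Theorem~\ref{thm:fSA}, and feed each stage into a relative finite chromatic approximation theorem, which you must first upgrade from the Herlihy--Shavit setting ($A$ a subdivided standard simplex, $C$ its boundary) to arbitrary finite chromatic pairs. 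The paper instead works with the infinite complex directly and inducts on the skeleta of $A$: assuming $\phi$ is defined on a subdivision of $\Skel^k(A)$, it applies \cite[Lemma 4.21]{HS99} to each $(k+1)$-simplex of $A$ separately (with target the $(k+1)$-skeleton of $B$, which is again link-connected), and uses local finiteness to choose, for each $k$-simplex, a subdivision fine enough to serve all the $(k+1)$-simplices containing it; part (b) is obtained by arranging agreement with $g$ skeleton by skeleton. The paper's organization buys something concrete: the Herlihy--Shavit lemma is only ever invoked in exactly the form in which it is stated (one simplex rel its boundary), so no separately stated ``upgraded'' relative theorem is needed, and the telescoping/interpolation bookkeeping of the exhaustion argument is avoided --- a point worth noting since in the chromatic setting the interpolating simplices between subdivision levels must themselves be consistently colored.

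One detail of your upgrade step needs repair: processing the top-dimensional simplices of $A_{k+1}$ in order of increasing dimension of $\sigma \cap (\text{already-processed part})$ does not guarantee that, when $\sigma$ is treated, its entire boundary already carries a chromatic simplicial map, which is what the extension lemma requires as input. The fix is exactly the dimension-by-dimension induction the paper uses: extend first over all $1$-simplices rel their endpoints, then over all $2$-simplices rel their (now fully approximated) boundaries, and so on. With that change, and granting the same standard points the paper itself glosses over (e.g., that the relative extension must not re-subdivide the part of the complex on which the map has already been fixed, since a simplicial map does not remain simplicial after further subdivision of its source), your argument goes through and is a legitimate, if somewhat heavier, alternative to the paper's proof.
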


\begin{proof}
We do this inductively on the skeleta of $A$. Suppose we have defined the map $\phi$ on a subdivision the $k$-skeleton $\Skel^k(A)$. We apply \cite[Lemma 4.21]{HS99} to the restriction of $f$ to each $(k+1)$-simplex $\sigma$ of $A$, mapped to the $(k+1)$-skeleton of $B$. (Observe that if $B$ is link-connected, then so are its skeleta.) We obtain a simplicial approximation to $f$ on $\sigma$, agreeing with the already constructed approximation on the boundary of $\sigma$. In the process we have to subdivide the simplices $\sigma'$ in $\Skel^k(A)$, and each $\sigma'$ is on the boundary of several $(k+1)$-simplices $\sigma$. However, by local finiteness, we can find a sufficiently fine subdivision that works for all $\sigma \supset \sigma'$. By continuing this ad infinitum, we obtain the approximation $\phi$. Furthermore, if we have a subcomplex $C$ as in part (b), then at each step we arrange so that the approximation agrees with the one defined on the corresponding skeleton of $C$. 
\end{proof}

\section{Link-connected tasks}
\label{sec:lc}

\subsection{A general result} 
The following proposition is an extension of the work of Herlihy and
Shavit from~\cite{HS99} to the case of \emph{infinite} chromatic complexes. 

\begin{proposition}
\label{prop:ext}
Let $M$ be a sub-IIS model, and $\T$ a terminating subdivision of $\I$ that is admissible for $M$. Suppose we have a task $T=(\I, \O, \Delta)$ such that the complexes $\Delta(\tau)$ are link-connected, for any $\tau \in \I$. Then, the task $T$ is solvable in $M$ if and only if there exists a continuous map $f: |K(\T)| \to |\O|$ such that $f(|K(\T)| \cap |\tau|) \subseteq |\Delta(\tau)|$ for all $\tau \in \I$.
\end{proposition}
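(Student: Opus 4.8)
The plan is to prove Proposition~\ref{prop:ext} by reducing it to the Generalized Asynchronous Computability Theorem (Theorem~\ref{thm:gact}) together with the chromatic simplicial approximation tool of Theorem~\ref{thm:chrSA}. The ``only if'' direction is the easy one: if $T$ is solvable in $M$, then by GACT we already have a chromatic map $\delta : K(\T') \to \O$ for some terminating subdivision $\T'$ admissible for $M$, satisfying the compatibility condition $\delta(\tau) \in \Delta(\sigma)$ whenever $|\tau| \subseteq |\sigma|$. Taking $f = |\delta|$ on $|K(\T')|$ gives a continuous map; the only subtlety is that the proposition fixes a particular $\T$ in advance, so I would need to observe either that one can replace $\T$ by $\T'$ (the statement should be read as ``there exists an admissible $\T$''), or, if $\T$ is genuinely fixed, that admissibility of both $\T$ and $\T'$ for $M$ lets one transport the map between them — I expect the intended reading is the former, and I would phrase the proof accordingly, so that this direction is essentially immediate from GACT.

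For the ``if'' direction — the substantive one — suppose we are given a continuous $f : |K(\T)| \to |\O|$ with $f(|K(\T)| \cap |\tau|) \subseteq |\Delta(\tau)|$ for all $\tau \in \I$. The goal is to produce a chromatic map $\delta : K(\T'') \to \O$ for some admissible terminating subdivision $\T''$, satisfying condition (b) of GACT, and then invoke the ``$\Leftarrow$'' direction of Theorem~\ref{thm:gact}. First I note that $K(\T)$ is a countable, locally finite chromatic complex (it is built from chromatic subdivisions of the finite complex $\I$, and local finiteness was assumed throughout Section~\ref{sec:topdef}), and $\O$ — or rather the relevant pieces $\Delta(\tau)$ — are link-connected by hypothesis. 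The natural move is to apply Theorem~\ref{thm:chrSA} to $f$, obtaining a subdivision $K(\T)'$ of $K(\T)$ and a chromatic simplicial approximation $\phi : K(\T)' \to \O$. The simplicial-approximation property guarantees $|\phi|(x) \in |\sigma|$ whenever $f(x) \in |\sigma|$; combined with the hypothesis on $f$ and the fact that the cells of $K(\T)$ lying in $|\tau|$ are exactly those mapped into $|\Delta(\tau)|$, this forces $\phi(\text{simplices of } K(\T)' \text{ inside } |\tau|)$ to land in $\Delta(\tau)$, which is precisely condition (b). One must be slightly careful to carry out the approximation one stable simplex at a time, or skeleton by skeleton as in the proof of Theorem~\ref{thm:chrSA}, so that compatibility across shared faces of $K(\T)$ is maintained — but local finiteness handles exactly this, as noted there.

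The remaining bookkeeping step, which I expect to be the main technical obstacle, is to repackage the subdivided complex $K(\T)'$ together with $\phi$ as the stable part $K(\T'')$ of an honest \emph{terminating} subdivision $\T''$ of $\I$ that is still admissible for $M$. The issue is that Theorem~\ref{thm:chrSA} produces an arbitrary subdivision $A'$ of $A = K(\T)$, not one of the special ``terminated chromatic subdivision'' form required by the definition of a terminating subdivision; moreover, $\T''$ must be a subdivision of all of $\I$, not just of the stable part $K(\T)$. I would address this by first invoking the remark in Section~\ref{sec:wf} (the chromatic version of Step~(4), proved in Section~5 of \cite{HS99}) to replace the arbitrary subdivision $A'$ by an iterated standard chromatic subdivision, refining $\phi$ accordingly; this is legitimate since the paper explicitly allows using that result verbatim, and the examples here only need standard chromatic subdivisions. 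Then I would build $\T''$ from $\T$ by subdividing each stable simplex of $\T$ further according to this standard chromatic refinement and re-declaring the resulting small simplices to be the new stable ones, while leaving the non-stable part of $\T$ (which drives admissibility) untouched; since admissibility of $\T$ for $M$ only asserts that every run eventually lands inside \emph{some} stable simplex $|\tau| \subseteq |\omega|$, and our new stable simplices still cover the same region $|K(\T)|$, admissibility is preserved. With $\T''$ and $\delta = \phi$ (after refinement) in hand, conditions (a) and (b) of Theorem~\ref{thm:gact} both hold, and the ``$\Leftarrow$'' half of GACT yields a protocol solving $T$ in $M$, completing the proof.
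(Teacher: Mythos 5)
Your overall route is the paper's route: handle ``only if'' by quoting GACT directly (the paper does the same, and is equally loose about the fact that the resulting terminating subdivision need not be the $\T$ fixed in the statement), and handle ``if'' by converting the continuous $f$ into a chromatic simplicial map via Theorem~\ref{thm:chrSA}, then repackaging it on the stable part of a refined terminating subdivision and invoking Theorem~\ref{thm:gact}. Your repackaging step is essentially the paper's notion of a \emph{stable refinement} $\T'$ of $\T$ (same stable region, $K(\T')$ a subdivision of $K(\T)$, admissibility preserved), composed with a chromatic carrier-preserving map onto the approximating subdivision; so that part is fine, if terse in both your write-up and the paper's.

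There is, however, a genuine gap in your key step: you apply Theorem~\ref{thm:chrSA} to $f: |K(\T)| \to |\O|$ with target $\O$, but that theorem requires the \emph{target} to be link-connected, and the proposition only assumes that each $\Delta(\tau)$ is link-connected --- $\O$ itself need not be (the paper's own example $L^{\ord}$ is an output complex that is not link-connected). So the existence of a chromatic simplicial approximation into $\O$ is exactly what is not justified by your invocation; your subsequent observation that the approximation property forces $\phi$ to respect the carriers is correct, but it presupposes the approximation exists and is chromatic. The repair is what the paper does: induct on the dimension $d$ of the faces $\tau$ of $\I$, approximate the restriction of $f$ to $|K(\T)| \cap |\tau|$ as a map into the link-connected complex $\Delta(\tau)$ (so Theorem~\ref{thm:chrSA} applies), and use part (b) of that theorem to make the new approximation agree with the one already constructed on the $(d-1)$-dimensional boundary faces; carrier preservation then holds by construction rather than via the approximation property. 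You do gesture at working ``one stable simplex at a time, or skeleton by skeleton,'' but you motivate this only by compatibility across shared faces of $K(\T)$; the real reason the face-by-face induction over $\I$ is unavoidable is the failure of global link-connectedness of $\O$, and as written your main argument does not address it.
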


\begin{proof} If $\T$ and $\T'$ are terminating subdivisions of $\s$, we say that $\T'$ is a {\em stable refinement} of $\T$ if $|K(\T)| = |K(\T')|$, and every simplex of $\T'$ is contained in a simplex of $\T$; i.e., $K(\T')$ should be a subdivision of $K(\T)$. Note that if $\T$ is admissible for a model $M$, then so is $\T'$.

Given the continuous map $f$, we shall construct a simplicial, chromatic approximation $\delta: K(\T') \to \O$ as needed to apply GACT; here, $\T'$ is a stable refinement of $\T$. 

We first construct a chromatic subdivision $K'$ of $K(\T)$, whose vertices are not necessarily in the standard chromatic subdivisions of $\I$, and a chromatic map $\delta': K' \to \O$ (an approximation to $f$) such that $\delta'$ is carrier-preserving: $\delta'(\sigma) \in \Delta(\tau)$ when $|\sigma| \subseteq |\tau|$. We do this inductively on $d \geq 0$: For each $d$, we define the values of $\delta'$ on the simplices that are contained in $d$-dimensional faces of $|\I|$. Suppose we have defined $\delta'$ for $d-1$, and pick a $d$-dimensional face $\tau$ of $|\I|$. The restriction of $f$ to $|K(\T)| \cap |\tau|$ can be approximated by a simplicial map from a subdivision of $K(\T)$, extending the already constructed $\delta'$ on the $(d-1)$-dimensional boundary. Further, since $K(\T)$ is locally finite (by definition) and $\Delta(\tau)$ is link-connected, it follows from Theorem~\ref{thm:chrSA} that we can arrange for $\delta'$ to preserve colors. 

Thus, we find a sufficiently fine stable refinement $\T'$ of $\T$ and a chromatic, carrier-preserving map $g: K(\T') \to K'$. We then set $\delta = \delta' \circ g$ and apply Theorem~\ref{thm:gact}.

Conversely, if $T$ is solvable in $M$, we can apply GACT and obtain a terminating subdivision $\T$ and a chromatic map $\delta: K(\T) \to \O$. The desired continuous map $f$ is the geometric realization of $\delta$.
\end{proof}

\subsection{An example of GACT in action}
\label{sec:action}
Consider the $t$-resilient model $\Res_t$ from Example~\ref{ex:res}. Let $L_t$ be the affine task with output complex consisting of all the simplices $\sigma$ in the second chromatic subdivision $\Chr^2 \s$ such that no vertex of $\sigma$ is on an $(n-t-1)$-dimensional face of $\s$. For example, when $n=2$ and $t=1$, the output complex for $L_1$ looks like:
$$ \scalebox{.85}{\begin{picture}(0,0)%
\includegraphics{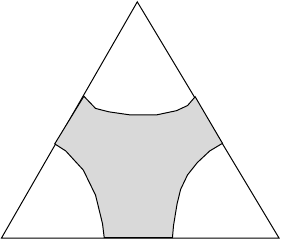}%
\end{picture}%
\setlength{\unitlength}{2486sp}%
\begingroup\makeatletter\ifx\SetFigFont\undefined%
\gdef\SetFigFont#1#2#3#4#5{%
  \reset@font\fontsize{#1}{#2pt}%
  \fontfamily{#3}\fontseries{#4}\fontshape{#5}%
  \selectfont}%
\fi\endgroup%
\begin{picture}(2139,1824)(754,-2548)
\put(1731,-2034){\makebox(0,0)[lb]{\smash{{\SetFigFont{7}{8.4}{\rmdefault}{\mddefault}{\updefault}{\color[rgb]{0,0,0}$L_1$}%
}}}}
\end{picture}%
}$$

\begin{proposition}
\label{prop:res}
The task $L_t$ is solvable in the model $\Res_t$.
\end{proposition}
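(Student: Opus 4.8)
The plan is to apply Proposition~\ref{prop:ext}, which reduces the problem to producing (i) a terminating subdivision $\T$ of $\s$ admissible for $\Res_t$, and (ii) a continuous carrier-preserving map $f : |K(\T)| \to |L_t|$. The natural choice is to terminate exactly at the moment a run has revealed enough fast processes. Concretely, I would build $\T$ by the standard chromatic subdivision process $\Chr^k\s$, where at stage $k$ we mark a simplex $\sigma \in C_k$ as stable (put it in $\Sigma_k$) as soon as $\sigma$ is carried by a face $\t \subseteq \s$ of dimension $\geq n-t$, i.e. as soon as $|\chi(\sigma)| \geq n+1-t$. The intuition: a run $r \in \Res_t$ has $|\fast(r)| \geq n+1-t$, and the affine projection $\pi(r)$ eventually lands in a chromatic simplex whose color set is $\fast(r)$; so after finitely many iterations the run's simplex $\sigma_k$ has color set of size $\geq n+1-t$, hence is contained in a stable simplex. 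This gives admissibility; I would spell this out using the description of runs as nested simplices $\sigma_0 \supseteq \sigma_1 \supseteq \cdots$ from Section~\ref{sec:top} and the fact that $\chi(\pi(r)) = \fast(r)$.

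Next I need to understand $K(\T)$ for this $\T$ and check $L_t$'s $\Delta$-values are link-connected so that Proposition~\ref{prop:ext} applies. Since $L_t$ is an affine task, $\Delta(\t) = L_t \cap \Chr^2\t$ is, for each face $\t$ with $\dim \t \geq n-t$, the second chromatic subdivision of $\t$ with the ``forbidden'' low-dimensional sub-faces of $\s$ removed; for $\dim \t < n-t$ it is empty. I would argue these complexes are link-connected: $\Chr^2\t$ itself is link-connected (chromatic subdivisions of a simplex are, being PL balls with the standard structure — cf.\ the discussion around Definition~\ref{def:link} and \cite{HS99}), and deleting the simplices meeting an $(n-t-1)$-face leaves a complex that is still a manifold-with-boundary of the right connectivity. (If the cleanest route is to instead invoke the fact that $L_t$ is itself a subcomplex of $\Chr^2\s$ homeomorphic to a ball-minus-a-neighborhood-of-a-low-skeleton, that also gives link-connectedness; I would pick whichever is shortest.)

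The remaining, and main, task is to construct the continuous map $f : |K(\T)| \to |L_t|$ with $f(|K(\T)| \cap |\t|) \subseteq |\Delta(\t)|$ for all faces $\t$. Observe $|K(\T)| \subseteq |\s|$ is the closed subset consisting of all points lying in simplices carried by faces of dimension $\geq n-t$ — equivalently, $|K(\T)| = |\s| \setminus \bigcup\{\st(p) : p \text{ a vertex of a } <(n-t)\text{-face}\}$... more precisely, it is the complement of the open stars of the low-dimensional faces. I would define $f$ by a retraction: push $|\s|$ radially away from its $(n-t-1)$-skeleton into the subcomplex $|L_t|$, using a partition-of-unity / barycentric-coordinate formula that is the identity away from the $(n-t-1)$-skeleton and sends a point near a low face to the ``opposite'' part of the simplex. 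Carrier-preservation is built in because such a radial push of a point in $|\t|$ stays in $|\t|$, and a point in $|K(\T)| \cap |\t|$ is already away from the forbidden skeleton, so it maps into $|L_t \cap \Chr^2\t| = |\Delta(\t)|$; on a face $\t$ of dimension $< n-t$ the set $|K(\T)| \cap |\t|$ is empty, so there is nothing to check. The hard part will be writing down this retraction explicitly and checking it is well-defined and continuous across all faces simultaneously — in particular that the formula on $|K(\T)| \cap |\t|$ lands in $|\Delta(\t)|$ rather than merely in $|\Chr^2\t|$; once that is done, Proposition~\ref{prop:ext} immediately yields solvability of $L_t$ in $\Res_t$.
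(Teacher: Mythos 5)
Your overall strategy is the paper's strategy (reduce to Proposition~\ref{prop:ext}, build a terminating subdivision admissible for $\Res_t$, and use a radial retraction away from the $(n-t-1)$-skeleton as the continuous map), but your termination rule has a genuine flaw. You stabilize a simplex of $C_k$ as soon as its carrier has dimension $\geq n-t$ (equivalently, in your phrasing, $|\chi(\sigma)|\geq n+1-t$). Taken literally this stabilizes $\s$ itself at stage $0$; and even under the charitable reading (apply the rule only from stage $2$ on), a top-dimensional simplex of $\Chr^2\s$ can have full color set and carrier $\s$ while still containing a vertex lying on an $(n-t-1)$-dimensional face of $\s$ (e.g.\ a corner vertex of $\s$ is a vertex of full-dimensional simplices of every $\Chr^k\s$). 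Such simplices get terminated under your rule, so $|K(\T)|$ meets the forbidden skeleton. Then for a face $\t$ with $\dim\t<n-t$ you have $|K(\T)|\cap|\t|\neq\emptyset$ while $\Delta(\t)=\emptyset$, so no carrier-preserving continuous $f:|K(\T)|\to|L_t|$ can exist and Proposition~\ref{prop:ext} cannot be invoked; your claim that ``on a face $\t$ of dimension $<n-t$ the set $|K(\T)|\cap|\t|$ is empty'' is false for your $\T$. The same defect shows up in your map: radial projection away from the $(n-t-1)$-skeleton is undefined at points of that skeleton, which your $|K(\T)|$ contains.

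The missing idea is that the termination criterion must be geometric, not chromatic: stabilize only simplices whose realization is disjoint from the $(n-t-1)$-skeleton, and keep subdividing forever near that skeleton. Concretely, the paper terminates at stage $2$ the simplices of $\Chr^2\s$ with no vertex on an $(n-t-1)$-face (the region $R_0=|L_t|$), then at successive stages the ``shells'' $R_1,R_2,\dots$ formed by the analogous simplices of $\Chr^{k+2}\s$ minus the previously stabilized region; thus $|K(\T)|$ is exactly the complement of the $(n-t-1)$-skeleton. Admissibility still holds because $\chi(\pi(r))=\fast(r)$ has at least $n+1-t$ colors for $r\in\Res_t$, so $\pi(r)$ avoids the skeleton and the nested simplices of the run eventually enter some stable shell; and the radial retraction is now well defined and continuous on $|K(\T)|$, is the identity on $R_0$, and is carrier-preserving, so Proposition~\ref{prop:ext} applies. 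This ``subdivide forever near the skeleton'' feature is precisely how the non-compactness of $\Res_t$ is absorbed, and it is the step your color-based rule skips. Your admissibility sketch and the link-connectedness remarks are otherwise in the right spirit.
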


\begin{proof} Note that for each face $\t \subseteq \s=\I$, the complex $\Delta(\t)$ for the task $L_t$ is link-connected. Therefore, it suffices to find a terminating subdivision $\T$ and a continuous map $f$ with the properties required in Proposition~\ref{prop:ext}.

For $n \geq 0$, let $\tilde R_n \subset |\s|$ be the union of (the geometric realizations of) all the simplices $\sigma \subset \Chr^{n+2} \s$ such that no vertex of $\sigma$ is on an $(n-t-1)$-dimensional face of $\s$. Let $R_0 = |L_t|$ and, for $n > 0$, let $R_n$ be the closure of $\tilde R_n - \tilde R_{n-1}$. The union of all $R_n$'s is the complement of the $(n-t-1)$-skeleton of $\s$:
$$ \scalebox{.85}{\begin{picture}(0,0)%
\includegraphics{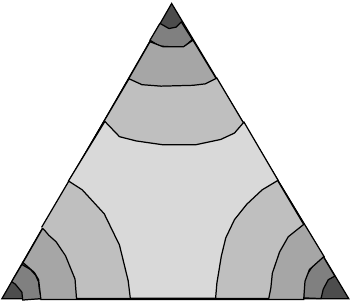}%
\end{picture}%
\setlength{\unitlength}{3108sp}%
\begingroup\makeatletter\ifx\SetFigFont\undefined%
\gdef\SetFigFont#1#2#3#4#5{%
  \reset@font\fontsize{#1}{#2pt}%
  \fontfamily{#3}\fontseries{#4}\fontshape{#5}%
  \selectfont}%
\fi\endgroup%
\begin{picture}(2139,1832)(754,-2556)
\put(1731,-1154){\makebox(0,0)[lb]{\smash{{\SetFigFont{6}{7.2}{\rmdefault}{\mddefault}{\updefault}{\color[rgb]{0,0,0}$R_2$}%
}}}}
\put(1696,-2003){\makebox(0,0)[lb]{\smash{{\SetFigFont{9}{10.8}{\rmdefault}{\mddefault}{\updefault}{\color[rgb]{0,0,0}$R_0$}%
}}}}
\put(1212,-2247){\makebox(0,0)[lb]{\smash{{\SetFigFont{8}{9.6}{\rmdefault}{\mddefault}{\updefault}{\color[rgb]{0,0,0}$R_1$}%
}}}}
\put(990,-2389){\makebox(0,0)[lb]{\smash{{\SetFigFont{6}{7.2}{\rmdefault}{\mddefault}{\updefault}{\color[rgb]{0,0,0}$R_2$}%
}}}}
\put(1726,-1461){\makebox(0,0)[lb]{\smash{{\SetFigFont{8}{9.6}{\rmdefault}{\mddefault}{\updefault}{\color[rgb]{0,0,0}$R_1$}%
}}}}
\put(2184,-2215){\makebox(0,0)[lb]{\smash{{\SetFigFont{8}{9.6}{\rmdefault}{\mddefault}{\updefault}{\color[rgb]{0,0,0}$R_1$}%
}}}}
\put(2441,-2341){\makebox(0,0)[lb]{\smash{{\SetFigFont{6}{7.2}{\rmdefault}{\mddefault}{\updefault}{\color[rgb]{0,0,0}$R_2$}%
}}}}
\end{picture}%
}$$

The terminating subdivision $\T$ is as follows: It starts with $\Sigma_0 = \Sigma_1 =
\emptyset$, so that $C_0=\s, C_1=\Chr^1 \s, C_2=\Chr^2 \s$. We then let $\Sigma_2$ be the subcomplex of $\Chr^2 \s$ supported in the region $R_0$. This defines $C_2$ so that on the complement of $R_0$ it consists of the simplices in $\Chr^3\s$. After this, we terminate all
the simplices in $R_1$. Now on the complement of $R_0\cup R_1$ we
have the fourth chromatic subdivision. We then terminate the simplices in $R_2$, and so on.  
By construction, eventually, every simplex contained in any $R_k$ is
stable in this terminating subdivision. Since the affine projection
$\pi(\Res_t)$ is contained in the union $|K(\T)|$ of all the $R_k$'s, we deduce that $\T$ is admissible for $\Res_t$.

It remains to construct the continuous map $f: |K(\T)| \to |L_t|=R_0$. We let the restriction of $f$ to $R_0$ be the identity, and map everything else onto the boundary $R_0 \cap R_1$ using radial projection away from the $(n-t-1)$-skeleton of $\s$:
$$ \scalebox{.85}{\begin{picture}(0,0)%
\includegraphics{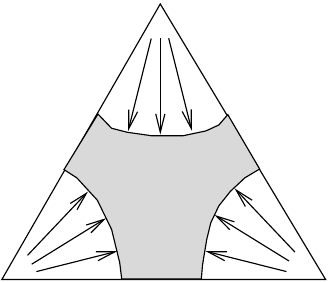}%
\end{picture}%
\setlength{\unitlength}{2901sp}%
\begingroup\makeatletter\ifx\SetFigFont\undefined%
\gdef\SetFigFont#1#2#3#4#5{%
  \reset@font\fontsize{#1}{#2pt}%
  \fontfamily{#3}\fontseries{#4}\fontshape{#5}%
  \selectfont}%
\fi\endgroup%
\begin{picture}(2139,1824)(754,-2548)
\put(1699,-2030){\makebox(0,0)[lb]{\smash{{\SetFigFont{8}{9.6}{\rmdefault}{\mddefault}{\updefault}{\color[rgb]{0,0,0}$R_0$}%
}}}}
\end{picture}%
}$$
Observe that radial projection preserves boundaries, so Proposition~\ref{prop:ext} applies. \end{proof}

\begin{remark}
An alternative, operational solution of  task $L_t$ via
$t$-resilient atomic-snapshots was given by the
first author in~\cite{G98}. More precisely, the Red-Yellow-Green
algorithm in~\cite[Section~4]{G98} specifies an intricate simulation
scheme that allows for solving $L_t$.
\end{remark}

\section{Related work}
\label{sec:related}

The topological conditions of wait-free task solvability were
expressed by Herlihy and Shavit ~\cite{HS93,HS99}  in the form of ACT.
In the restricted case of \emph{colorless} tasks that, roughly, can be defined without taking process
identifiers in mind, Herlihy and Rajsbaum~\cite{HR10-shell,HR13} derived
task solvability conditions in adversarial shared-memory
models~\cite{DFGT11}.
This paper proposes a characterization  of generic (not
necessarily colorless) tasks in any (not necessarily adversarial) sub-IIS model.

The IIS model was introduced by Borowsky and
Gafni~\cite{BG97} and shown to precisely capture  the
standard chromatic subdivision of the input
complex~\cite{Lin10,Koz12}.
Due to the elegance of its topological representation, IIS has been
widely used topological reasoning about distributed computing~\cite{HS93,BG93b,BG97,HS99,HKR14}. 
In~\cite{BG97,GR10-opodis}, IIS has been shown equivalent to
SM in terms of task solvability.
Rajsbaum et al.~\cite{RRT08} and, more recently, Raynal and Stainer~\cite{RS12} relate proper subsets of
sub-IIS and sub-SM models restricted using specific failure detectors. 
A recent paper~\cite{BGK14} extends these equivalences to arbitrary 
sub-SM and sub-IIS models,
thus justifying the choice of IIS as a
model of study.

The difficulty of dealing with certain problems in certain non-compact models, such as
consensus and $t$-resilience, has been studied before by Lubitsch and
Moran~\cite{LM95-compact}, Brit and Moran~\cite{BM96-compact},
Moses and Rajsbaum~\cite{MR02-compact}.
By deriving topological solvability conditions for any task and any sub-IIS model, this
paper brings this work to a higher level of generality. 
The continuous space $|\s|$ has appeared previously in the work of Saks and
Zaharoglou~\cite{SZ00} where it was used to derive the impossibility
wait-free set agreement. 


\section{Concluding remarks}\label{sec:conc}

We presented a version of a generalization of ACT.
Other versions may be possible through the relation between simplicial and continuous maps,
as well as through defining terminating subdivisions not necessarily with respect to $\Chr^m(\s)$, in 
analogy to the sufficiency condition of ACT~\cite{HS99}. We chose the simplest version which 
still provides us with the benefit of producing a ``topological
solution'' to the given task.

The main technical challenge we faced was to define and view the IIS model 
directly,  rather than just through the prism of the simulations from the
standard (non-iterated) model~\cite{BG97,GR10-opodis}. 
This brought forth a coherent view of IIS,  as well as exposed the
richness of the model.

The generic IIS models considered in this paper are just arbitrary
subsets of the various possible interleaving of reads and writes,
which is an extension with respect to the previous attempts to model
sub-IIS computations~\cite{RRT08,RS12}. Yet, distributed computing
refers also to the availability of one-shot  objects, e.g., consensus,
$k$-set agreement, etc. Of course, we can produce a sub-IIS model
which is equivalent to having consensus, or any other simple
object. An open question is whether our framework embedded in a large
enough dimension can capture  ``all of distributed computing,''
at least with respect to terminating computations. In particular, what will be the
sets of runs that correspond to the availability of the M\"obius
task~\cite{GRH06} or a task from the family of $0$-$1$
exclusion~\cite{Gaf08-01}? We know that, for instance, the ``symmetric''
task on $n$ processes from~\cite{GRH06} can also be formulated as a
regular task on $2n-1$ processes, hence increasing the task dimension
does help here. Our speculation is that any computability question for a
``reasonable'' one-shot problem in distributed computing is equivalent to
a question of task solvability in a sub-IIS model. 

\subsection*{Acknowledgement}
We are in debt to Robert F. Brown for helpful discussions in the very early stages of this research.


\bibliography{references}

\end{document}